\newtheorem{goal}{Goal}
\newtheorem{assumptions}{Assumption}
\newtheorem{context}{Session Context}
\newcommand{\myno}{$\times$}
\newcommand{\myheader}[1]{{ \textbf{\textit{{#1}} }}}
\newcommand{\mysaveA}{\vspace*{-6pt}} 
\newcommand{\mysaveAA}{\vspace*{-3pt}} 
\newcommand{\mysaveB}{\vspace*{-4pt}} 
\newcommand{\mysaveC}{\vspace*{-2pt}} 
\title{Cryptographic Binding Should Not Be Optional:\\
A Formal-Methods Analysis of\\
FIDO UAF Channel Binding}
\author{Enis Golaszewski\inst{1} \and 
Alan T. Sherman\inst{1} \and 
Edward Zieglar\inst{2} \and \\
Jonathan D. Fuchs\inst{1} \and
Sophia Hamer\inst{1} 
}
\institute{Cyber Defense Lab, University of Maryland, Baltimore County (UMBC),\\ 
1000 Hilltop Circle, Baltimore MD 21228, USA \\
\email{\{golaszewski, sherman, jfuchs2, chamer1\}@umbc.edu }\and 
National Security Agency, 9800 Savage Road, 20755 Fort George G. Meade, USA 
\email{evziegl@uwe.nsa.gov}}
\date{July 19, 2023}
\begin{document}

\maketitle

\begin{abstract} 
As a case study in cryptographic binding, we present a formal-methods analysis of 
the  cryptographic channel binding mechanisms in the 
{\it Fast IDentity Online (FIDO) Universal Authentication Framework (UAF)} 
authentication protocol, which
seeks to reduce the use of traditional passwords in favor of authentication devices.
First, we show that UAF's channel bindings fail to mitigate protocol interaction 
by a \textit{Dolev-Yao} adversary, enabling the adversary to transfer the server's authentication challenge to alternate sessions of the protocol.
As a result, in some contexts, the adversary can masquerade as 
a client and establish an authenticated session with a server (e.g., possibly a bank server).
Second, we implement a proof-of-concept man-in-the-middle attack against eBay's open source FIDO UAF implementation.
Third, we propose and formally verify improvements to UAF.
The weakness we analyze is similar to the vulnerability discovered in the Needham-Schroeder protocol over 25 years ago. 
That this vulnerability appears in the FIDO UAF standard
highlights the strong need for protocol designers to bind messages properly and to analyze their designs with formal-methods tools.
To our knowledge, we are first to carry out a formal-methods analysis of channel binding in UAF and first to exhibit details of an attack on UAF that exploits the weaknesses of UAF's channel binding.
Our case study illustrates the importance of cryptographically binding context to protocol messages to prevent an adversary from misusing messages out of context.
\end{abstract}

\keywords{Authentication, channel binding, 
cryptographic binding, cryptographic protocols, Cryptographic Protocol Shapes Analyzer (CPSA), cryptography, cybersecurity, Fast Identity Online (FIDO), formal-methods analysis of protocols, Universal Authentication Framework (UAF)}


{\mysaveA}
\section{Introduction}

42 years apart, the 1978 {\it Needham-Schroeder (NS)} 
public-key protocol~\cite{Needham1978}
and the 2020 {\it Fast IDentity Online (FIDO) Universal Authentication Framework (UAF)} 1.2 specification~\cite{baghdasaryan2020fido}
share a common flaw: they both fail to cryptographically bind a sensitive value to its source.
To protect against a malicious adversary, including
a {\it Dolev-Yao (DY)} network intruder~\cite{dolevyao}, 
cryptographic binding is necessary.
Without correct cryptographic binding, an adversary may violate a protocol's security goals by copying sensitive cryptographic values between protocol sessions to execute {\it man-in-the-middle (MitM)} attacks.
Mitigating such attacks requires emerging protocol standards to apply cryptographic bindings correctly and to perform formal-methods analyses to verify the correctness of these bindings.
Using formal methods, we analyze a channel-binding flaw of FIDO UAF, 
demonstrate a resulting MitM attack against eBay's open-source UAF server, and
suggest and formally verify mitigations.

Our analysis reveals MitM attacks on the UAF attestations that 
enable an adversary to masquerade as a client, for example, 
when establishing a session with a bank server.
A version of this attack follows from any one of the following weaknesses in UAF:
(1)~Even when performing channel binding, the server might not bind the challenge in a manner that enables the client to authenticate the challenge.
(2)~The server can selectively accept incorrect channel bindings, potentially accepting attestations from malicious protocol sessions.
(3)~The standard makes channel bindings optional, creating circumstances in which there is no cryptographic binding between the client's attestation and the server's protocol session.
Exploiting these weaknesses, an adversary can trick a legitimate client into acting as a confused deputy~\cite{hardy1988confused}, producing attestations for a legitimate server's challenge, which the legitimate server accepts. 

Cryptographic \textit{channel binding} is an important type of cryptographic binding which binds sensitive values to an underlying authenticated communication channel.
A common type of channel binding is \textit{Transport Layer Security (TLS)} channel binding~\cite{rescorla2001ssl}, 
which, through various methods, binds sensitive values to properties of an end-to-end encrypted TLS channel.
The UAF specification recommends TLS channel binding to mitigate MitM attacks, in which an adversary might fraudulently authenticate as a legitimate user to steal the user's private information, engage in transactions on the user's behalf, or usurp the user's identity.
For emerging protocol standards, channel binding is a critical tool for resisting MitM attacks and other undesirable structural flaws.


Formal-methods protocol analysis is a powerful technique for proving the correctness of a protocol's security properties or, by discovering counterexamples, identifying security flaws.
This technique lends itself well to automatic tools (e.g., theorem provers or model checkers---see Section~\ref{ssec:formal-methods}), which verify the security properties of a cryptographic protocol specified in a formal language.
Generally, these tools assume \textit{symbolic} models with perfect cryptography, and evaluate protocols against a DY adversary to reveal security flaws.
Security flaws are subtle, particularly when an adversary manipulates messages between an unbounded number of protocol instances, and can be difficult to identify manually.


We present a formal-methods analysis of channel bindings in the 2020 FIDO UAF 1.2 protocol,
which seeks to reduce the use of traditional passwords in favor of authentication devices.
Using the \textit{Cryptographic Protocol Shapes Analyzer (CPSA)}~\cite{liskov2016cryptographic}---which is one of several tools specialized for formal-methods analysis of protocols---we
model several variations of FIDO UAF in the 
mathematical abstraction of \textit{strand space} (see Section~\ref{ssecstrands}) and
apply CPSA to explore all essentially different possible executions of each model.
We do so separately from the perspectives of the client and the server, for
specified assumptions about how certain important cryptographic values (e.g., keys, nonces)
were generated and who knows them.
This exploration includes possible interactions with a DY adversary. 
CPSA either exhaustively proves that no adverse interaction is possible, or CPSA
finds an attack against the model.  
We summarize our results by stating and proving, or disproving, UAF security
properties, focusing on UAF's declared security goals.

While powerful, formal-methods analysis does not address all aspects of protocol security.
For example, this type of analysis will not uncover errors in implementation, flaws resulting from using weak cryptographic primitives, or applications of protocols to inappropriate settings.
For these reasons, in our review of eBay's implementation of FIDO UAF, 
we also carefully studied their source code.
Despite the limitations of formal-methods analysis, protocols---including UAF---would
usually benefit from such analyses.

\myheader{FIDO UAF.}
UAF is an attractive emerging standard and the subject of a growing number of studies, including formal-methods studies (see Section~\ref{sec:previous}).
The FIDO Alliance counts among its members many recognizable technology giants, financial institutions, retailers, government institutes, and standards bodies from nations throughout the world~\cite{fidomembers}.
These members include Google, Apple, Microsoft, and Amazon.
UAF is a complicated standard, specified across over a dozen documents, that attempts to support a large number of technologies and use cases.
To our knowledge, FIDO developed the standard without any formal verification, resulting in several studies (including this one) identifying flaws.
With UAF replacing many traditional password-based systems, it is vital that we analyze and improve the standard's cryptographic bindings.

A core component of UAF is the \textit{authenticator}---a networked device implementing a local authentication mechanism, such as facial recognition, voice detection, or PIN entry.
UAF specifies four protocols: registration, authentication, transaction confirmation, and de-registration.
In registration, a \textit{client} pairs an authenticator with a server 
(the ``\textit{relying party}'') for subsequent use in the other protocols.
In authentication, transaction confirmation, and de-registration, the client authenticates to a relying party by presenting 
\textit{attestations} (assertions) of a user's identity from a subset of paired authenticators (see Figure~\ref{fig:uaf_arch}).
To associate an authenticator's assertion with a client's specific request, the relying party issues a cryptographic challenge, which the authenticator signs.

In FIDO UAF, the relying party's challenge, and thus the resulting assertion, may optionally bind to the session's underlying TLS channel using one of several TLS channel bindings.
By shedding light on how to channel bind correctly, including the choice of binding method and to which data to bind, our study of UAF's channel binding benefits UAF and future protocol standards.
We introduce UAF's available channel bindings in Section~\ref{ssec:binding} and discuss FIDO's decision to make these bindings optional in Section~\ref{sec:optional_binding}.

\myheader{Our Work.}
To our knowledge, we are first, including using formal methods, to analyze whether the available TLS channel bindings adequately address the UAF security goals.  In particular, prior work by Feng et al.\ \cite{feng2021formal} does not perform such analysis.
 

To analyze channel binding in UAF, we model and analyze variations of the registration and authentication protocols using CPSA.
For each protocol, we model five available channel bindings (including no channel binding) using three different versions of TLS: TLS 1.2 with RSA key exchange (TLS1.2-RSA), TLS 1.2 with DH key exchange (TLS1.2-DH), and TLS 1.3.
Additionally, we model each protocol using the TLS 1.3 
``\textit{TLS-exporter}'' channel binding'~\cite{rfc9266}, 
which UAF does not officially support.
Table~\ref{tbl:models} lists our CPSA models of the UAF protocols.
To reflect the interactions between the lower-level TLS connections and the higher-level UAF messages, 
each CPSA model incorporates existing strand-space models of TLS 1.2 and TLS 1.3.
We model and analyze these different binding scenarios to illustrate the binding properties necessary to achieve UAF security goals under a variety of cryptographic assumptions.

As a proof of concept, we implement our MitM attack against eBay's FIDO-certified, open-source UAF server \cite{eBayFIDO}, which does not implement channel binding (see Section~\ref{sec:attack}).
On August 13, 2023, we notified eBay's GitHub maintainers of this vulnerability, but 
they never responded, and eBay appears to have abandoned their UAF implementation.
To carry out this attack, an adversary exploits inadequate binding of the server's challenge to specific sessions of UAF authentication and registration protocols, producing a protocol interaction in which the honest client-authenticator pair generates attestations for the adversary's malicious sessions.
This attack demonstrates an example of a harmful protocol interaction resulting from inadequate cryptographic binding and verification.

Our contributions include: 
(1)~A formal-methods analysis of UAF~1.2's optional channel bindings, including bindings for TLS-1.2 (with RSA or DH key exchange) and TLS-1.3. 
Table~\ref{tbl:goals} summarizes the results of our analysis.
(2)~A structural weakness and resulting ``challenge-reissuing'' attack against several FIDO UAF channel bindings, exploiting inadequate binding of the server's challenge to a particular session (Section~\ref{sec:VAR}).
(3)~Details of a MitM attack in which an adversary reissues challenges to trick a client and authenticator pair to act as confused deputies to authenticate the adversary 
(Section~\ref{ssecattacks}).
(4)~A demonstration of the MitM attack against eBay's open-source UAF server 
(Section~\ref{sec:attack}).
(5)~Recommendations to improve channel binding in UAF 
(Section~\ref{sec:rec}).
We include artifacts of our work, including all of our CPSA and attack source code, and 
make these artifacts available on GitHub~\cite{PALgithub}.

In the rest of this paper, we 
(1)~introduce the UAF protocols and their channel binding mechanisms,
(2)~discuss previous work,
(3)~present our adversarial model, CPSA models, security goals, and security analysis,
(4)~identify resulting vulnerabilities, attacks, and risks,
(5)~describe and implement an attack against eBay's UAF implementation,
and (6)~recommend improvements to the UAF standard.
Several appendices include additional details, including
a high-level overview of UAF's primary structural weakness and 
three examples of resulting possible attacks;
background on cryptographic binding, strand spaces, and CPSA;
ladder diagrams of the UAF protocols; 
details about our models, security goals, formal-methods analyses; and
selected snippets of our CPSA source code.


{\mysaveA}
\section{FIDO}
\label{sec:fido}

We introduce FIDO UAF and discuss the framework's cryptographic channel binding, authentication protocol, and registration protocol.

{\mysaveB}
\subsection{FIDO UAF}
{\mysaveC}
{\mysaveC}

In 2013, the FIDO Alliance proposed UAF~\cite{baghdasaryan2020fido}, 
an open standard supplanting passwords in favor of 
\textit{authenticator devices}, such as mobile phones, which implement local authentication mechanisms (e.g., biometric, PIN).
Major design goals of UAF include enabling \textit{relying parties} (services seeking to authenticate users), 
to specify to users which types of authenticators they will accept.
As of September 2025, the FIDO website lists 508 certified implementations of UAF~\cite{fidocert}, including the eBay implementation that we attack in Section~\ref{sec:attack}.

UAF's design seeks to reduce the use of traditional passwords, which FIDO indicates as problematic due to weak password choices, password reuse, management of many passwords, and phishing attacks.
FIDO identifies passwords as responsible for over 80\% of all data breaches, and promotes UAF as a potential mitigation.
Additionally, relying parties that authenticate users using UAF do not store passwords or password hashes, mitigating common attacks in which the adversary steals password databases. 


\begin{figure}[t]
\includegraphics[width=\textwidth]{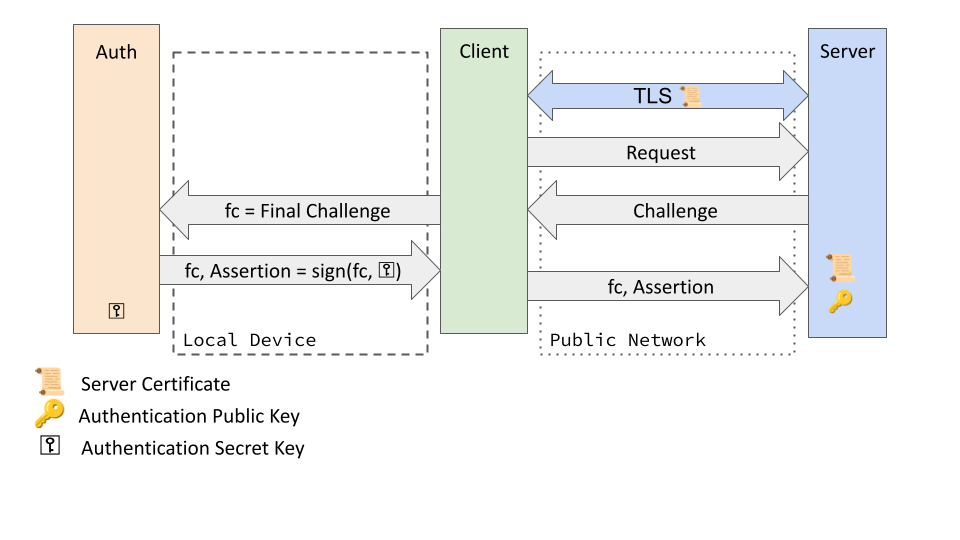}
\caption{
Architecture of UAF authentication using a pre-registered, embedded authenticator.
The server issues a challenge to the client, who incorporates that challenge into a final challenge and transmits it to the authenticator (Auth).
Using the authentication secret key, the authenticator signs an assertion, which the client transmits to the server to complete authentication.
The server verifies the assertion using the authentication public key that it obtains when the client registers the authenticator.
The attestation process in UAF registration reflects a similar structure.
In our analysis, we assume that the adversary cannot manipulate local device communication between the client and their authenticator.
}
\label{fig:uaf_arch}
\end{figure}


The FIDO Security Reference~\cite{baghdasaryan2022fido} specifies informal security goals for UAF network protocols, including strong user authentication (SG-1), forgery resistance (SG-11), parallel session resistance (SG-12), and forwarding resistance (SG-13), for which the reference lists channel binding as a security measure.
Despite listing channel binding as a security measure, 
to encourage adoption of the framework and to maximize compatibility with existing systems,
\textit{FIDO made channel binding optional} in the UAF standard.
In Section~\ref{sec:goals}, we formalize these security
goals as an ``injective agreement''~\cite{lowe1997hierarchy} on session context, 
and, in Section~\ref{sec:optional_binding}, 
we discuss FIDO's decision to make channel binding optional.

UAF protocols build on the principle of \textit{attestation}: clients prove their identities to relying parties by presenting one or more assertions of their identity signed by pre-registered authenticators.
Authenticators sign these assertions with 
asymmetric authentication keys that they generate when registering with a relying party.
Additionally, the authenticators sign \textit{endorsement keys} using \textit{attestation signing keys}, which the authenticators back by presenting \textit{attestation certificates} originating from device manufacturers.
By relying on authenticator attestations, UAF shifts authentication from traditional passwords (``something you know") to a client's possession of their registered authenticators (``something you have").
Figure~\ref{fig:uaf_arch} illustrates a high-level overview of the assertion process during UAF authentication.

UAF presents challenges to developers and researchers because it is 
complex---the standard comprises several protocols and variations spanning at least 12 lengthy documents---and specifies security goals only informally.
This complexity arises from the framework's effort to be universal: the framework attempts to support many types and models of authenticators, incorporate backwards compatibility with legacy authentication methods, and enable relying parties to specify custom levels of security and authenticator support.
To manage UAF's complexity, implementations may forgo optional features such as cryptographic channel binding, and formal-methods analyses may abstract away certain operations.
These measures may result in unexpected vulnerabilities.

{\mysaveB}
\subsection{Channel Binding}
{\mysaveC}
\label{ssec:binding}

To mitigate potential MitM attacks, UAF specifies an optional channel binding that binds a UAF session's challenge parameters to the underlying TLS channel between the client and the server.
UAF implementations may select from several TLS channel binding standards: 
\textit{token binding}~\cite{rfc8471}, 
\textit{channel ID binding}~\cite{balfanz2013channel}, 
\textit{server endpoint binding}~\cite{rfc5929}, and 
\textit{server certificate binding} 
(could be done in a manner similar to client certificate binding in OAuth~\cite{lodderstedt2025rfc}).
In token binding and channel ID binding, the client generates an asymmetric key pair, shares the public key with the server using extensions in the TLS handshake, and binds subsequent UAF challenge parameters to this public key.
When binding to the server endpoint, the client binds the challenge to a hash of the server's certificate.
When binding to the server certificate, the client binds the challenge
to the server's certificate.

While the phrase ``channel binding'' suggests binding to a specific TLS \textit{session}, 
each of the UAF's channel bindings binds only to one of two \textit{endpoints}: 
the client or the server.
The channel bindings are valid for multiple TLS sessions between endpoints, such as renegotiated sessions, but potentially enable bound UAF session challenge parameters to migrate between different UAF sessions sharing the same endpoints (see Section~\ref{sec:analysis}).
We analyze UAF with the framework's supported bindings and a newer channel binding 
TLS-exporter~\cite{rfc9266}, which binds to a specific TLS session but is unsupported by the current UAF standard.
To our knowledge, TLS 1.3 officially supports only TLS-exporter as a valid binding.

UAF allows a server to continue the protocol even when it is unable to verify bindings.
For example, the server might accept a binding from a client that binds to a TLS connection between the client and a TLS proxy, rather than between the client and the server.

{\mysaveB}
\subsection{Registration}
{\mysaveC}
\label{ssec:registration}


The UAF registration protocol comprises messages between three network entities: a FIDO UAF \textit{client}, a FIDO \textit{server}, and a FIDO \textit{authenticator}.
Registration pairs the client's authenticator with a server for future operations requiring authentication.
As is default in the UAF specification, we assume a first-factor authenticator integrated with the device running the client.
The client and the server communicate over a TLS channel.
The authenticator and the client reside together on a single device and communicate using a custom \textit{Application Programming Interface (API)}~\cite{baghdasaryan2022fido}.
Prior to executing the registration protocol, the client first authenticates to the server using a traditional authentication mechanism, such as providing a username and a password.

The relying party initiates the registration protocol by sending to the client four values: (1)~$username$, the username under which the client will register,
(2)~$policy$, a list of allowed and disallowed authenticator types,
(3)~$appId$, an identifier for the relying party, and
(4)~$challenge$, a freshly generated cryptographic nonce.
The client verifies that $appid$ matches the server's relying party, then proceeds to the next step of the protocol by preparing a registration request.

To prepare the registration request, the client prepares several additional values:
(5)~$facetId$, a set of identifiers for which the registration is valid, included for the purpose of preventing excess registrations for a relying party with multiple valid identities,
(6)~$tlsData$, an optional entry of data for channel binding, and
(7)~$fc$, a final challenge comprising a hash of $appId$, $challenge$, $facetId$, and $tlsData$.
The client transmits $username$, $appId$, and $fc$ to the authenticator via the local API.

The authenticator now generates a registration assertion by generating 
(8)~an asymmetric authentication key pair $(K_{pub}, K_{priv})$ and
(9)~a key handle $h = \hbox{hash}(K_{priv}, username)$.
In addition, the authenticator introduces the following values:
(10)~$aaid$, an identifier for the authenticator's model,
(11)~$cert_{attest}$, a certificate for a pre-loaded attestation key backed by a manufacturer's root attestation certificate,
(12)~\hbox{$reg$-$cntr$} and $cntr$, a pair of operation counters that assist the relying party in detecting duplicate assertion operations, and
(13)~$s$, an assertion signed by the authenticator's attestation key and containing $aaid$, $fc$, \hbox{$reg$-$cntr$}, $cntr$, and $K_{pub}$.
To the client, the authenticator transmits $aaid$, $K_{pub}$, $fc$, $h$, $cert_{attest}$, \hbox{$reg$-$cntr$}, $cntr$, and $s$, which the client forwards to the server over the existing TLS channel.
Figure~\ref{fig:uaf_reg} visualizes the cryptographic flow of the UAF registration network protocol.

{\mysaveB}
\subsection{Authentication}
{\mysaveC}
\label{ssec:authentication}


The UAF authentication protocol is similar to
registration, but  omits some of the values present in registration messages, and the authenticator signs the assertion using $K_{priv}$ (previously generated during registration) rather than with the device's attestation key.
Authentication and registration rely on a TLS channel between the client and the server and apply channel binding in the same manner.
Figure~\ref{fig:uaf_auth} illustrates the message flow of UAF authentication.

When initiating the authentication protocol, the server omits $username$ and transmits only $policy$, $appId$, and $challenge$. The client now transmits only $appId$ (or a similar key identifier for the corresponding relying party) and $fc$ to the authenticator. The authenticator responds with $fc$, a random nonce $n$, $cntr$, and an assertion $s$ consisting of a signature of $fc$, $n$, and $cntr$ signed by $K_{priv}$, the asymmetric authentication key that the authenticator generated during the registration protocol.


{\mysaveA}
\section{Previous Work}
{\mysaveAA}
\label{sec:previous}

In contrast with our work, existing formal-methods studies of FIDO UAF abstract away the TLS channel and neither explicitly model nor analyze the channel bindings.
By neglecting to analyze channel binding, previous work fails to 
examine whether UAF protects against undesirable protocol interactions
caused by the various ways UAF binds channels or chooses not to bind channels.

In 2021, Feng et al.\ \cite{feng2021formal} presented a formal-methods analysis of the UAF registration and authentication protocols using ProVerif.
In their analysis, they formalized and verified UAF authentication goals based on Lowe's authentication hierarchy~\cite{lowe1997hierarchy}.
They discovered and corrected several attacks on the UAF standard, including attacks against enterprise UAF implementations in finance.
Their ProVerif model abstracts away details of the TLS channel, representing it as a communication channel under an identifier.
To bind UAF messages to the channel, the model specifies a function that maps a TLS channel identifier to a unique bitstring representing the channel: a perfect channel binding.
In contrast, the approved UAF bindings bind only to one of the channel's endpoints, may be valid for multiple TLS channels between the same endpoints, and, when using TLS 1.2, depend on the client generating a fresh premaster secret.
Consequently, analysis of their model overlooks potential vulnerabilities resulting from the limitations of UAF's channel bindings.

Using ProVerif, Pereira et al.\ \cite{pereira2018formal} also present a formal-methods analysis of FIDO's U2F protocol, a similar protocol to those in UAF.
They conclude that, if the client verifies the \textit{appId} it receives from the server, the protocol achieves the standard's authentication goals.
Notably, their analysis completely omits channel binding, focusing exclusively on the \textit{appId}'s role in helping the client authenticate the server's challenge: they do not present any analysis of U2F with channel binding.
We consider the \textit{appId} a poor value to bind because it is a public, not a cryptographic value, and, by FIDO's own admission~\cite{baghdasaryan2020fido}, may be possible for an adversary to spoof (e.g., by compromising a subdomain owned by a legitimate organization).

Hu and Zhang~\cite{hu2016security} present an informal analysis of UAF, identifying three attacks: (1)~a misbinding attack, (2)~a parallel-session attack, and (3)~a multi-user attack.
These attacks do not consider the effect of channel binding and 
require a stronger adversary than a DY intruder:
the adversary corrupts the client, corrupts the authenticator, or exploits multiple users sharing an authenticator.


Panos et al.\ \cite{panos2017security} perform an informal analysis of UAF, presenting several high-level attack vectors.
B{\"u}ttner and Gruschka~\cite{buttner2023protecting} present MitM attacks on FIDO extensions, design a protocol to protect the extensions, and analyze the protocol using ProVerif.
Neither of these works considers channel binding.

Additional studies of FIDO UAF assess social-engineering 
attacks~\cite{ulqinaku2021real}, biometric authenticators~\cite{chae2018authentication}, informal trust requirements~\cite{loutfi2015fido}, and feasibility~\cite{chadwick2019improved}.
Other studies analyze related standards such as 
FIDO2~\cite{barbosa2021provable,guan2022formal}, 
for which attacks on UAF are relevant,
though these two studies do not consider channel binding.

Protocols which have been analyzed using CPSA include the CAVES multiparty attestation protocol~\cite{coker2011principles}, Zooko's forced-latency protocol~\cite{lanus2017analysis}, SRP-3~\cite{SRP}, TLS-1.3~\cite{bhandary2021searching}, and the Session Binding Proxy~\cite{SBP}.

Building on our 2021 initial analysis of UAF authentication, 
in 2022, Fuchs, Hammer, and Liu~\cite{fidoreg}
analyzed UAF registration. Our analysis of UAF incorporates and evolves
their analysis. 

{\mysaveA}
\section{Adversarial Model}
{\mysaveAA}
\label{sec:adversarial_model}

We evaluate UAF against a DY-style adversary that seeks to fool legitimate clients into generating assertions on the adversary's behalf, to enable the adversary to register or authenticate in place of an honest user.
To achieve this goal, the adversary transfers challenges from one UAF session to another by exploiting weak or missing cryptographic bindings, including TLS channel bindings.

Our adversary controls all traffic between UAF clients and servers, and is thus capable of adding, dropping, modifying, and replaying arbitrary messages between these entities.
The adversary does not have access to communication between the client and the authenticator, which takes place over a local API rather than over a network channel.
The adversary cannot break standard
cryptographic primitives such as encryption, digital signatures, or hashing,
and to compute these primitives
requires knowledge of correct keys or hash function inputs.
The adversary controls at least one legitimate UAF server on the network, for which they hold appropriate cryptographic keys and certificates.
In practice, this assumption may be an obscure UAF server operating under some dubious or compromised certificate, and may require the adversary to trick an honest user into communicating with it.

We assume that all legitimate network participants communicate with a legitimate certificate authority and deploy the same version of TLS and any channel bindings in use.
Upon request, the certificate authority produces authenticated certificates for any network entity, including the adversary.
Additionally, we assume that any attestation certificate originating from an authenticator is legitimate---in real implementations, the server must check such certificates during UAF registration.
The adversary cannot compromise legitimate authenticators to steal key material.

The adversary is capable of spoofing 
{\it application identifiers (appIds)}---FIDO acknowledges this possible attack in the UAF network protocol specification~\cite{baghdasaryan2020fido}.
In authentication and registration, servers bind their challenges to the \textit{appId}.
Because this value is known to all network participants and is possible to spoof or verify incorrectly, we consider it unsuitable for a cryptographic binding.

Our adversary wishes to access an honest user's accounts to steal information, engage in transactions on the user's behalf, or to usurp the user's identity when interacting with other users.
These activities enable an adversary to sell a user's private data to criminals, blackmail the user with details of their private life, steal the user's assets by transferring them to the adversary, and target the user's social network for further attacks.
Possible targets include a user's online banking,
E-commerce, social media, or business accounts.

{\mysaveA}
\section{CPSA Models (Summary)}
{\mysaveAA}
\label{sec:cpsa-model-abbrev}

For our formal-methods analysis, we created eight CPSA models of FIDO UAF (see Table~\ref{tbl:models}),
including a baseline using passwords over TLS~1.2 without UAF,
six different ways to bind the challenge, and
our suggested improvement that binds the challenge to TLS~1.3 channel key material
derived using TLS-exporter.  For details and sample snippets of our CPSA source code, 
see Sections~\ref{sec:cpsa-model} and \ref{sec:code}.

\begin{table}[!h]
\caption{Summary of our CPSA models. Each UAF model includes the registration and authentication protocols. We include two variations of each TLS 1.2 model: (1)~with RSA key exchange, (2)~with DH key exchange.}
\label{tbl:models}
\begin{center}
\begin{tabular}{l|l}
Name & Summary\\
\hline
Baseline-NoUAF & Plain passwords over TLS 1.2 without UAF.\\
UAF-NoBinding-TLS1.2 & Unbound challenge over TLS 1.2. \\
UAF-TokenBinding-TLS1.2 & Client applies TLS 1.2 token binding to challenge.\\
UAF-ChannelId-TLS1.2 & Client applies TLS 1.2 channel ID binding to challenge. \\
UAF-Endpoint-TLS1.2 & Client binds challenge to hash of server certificate over TLS 1.2. \\
UAF-ServerCert-TLS1.2 & Client binds challenge to server's certificate over TLS 1.2.\\
UAF-NoBinding-TLS1.3 & Unbound challenge over TLS 1.3.\\
UAF-Exporter-TLS1.3 $^*$ & Client binds challenge to TLS 1.3 channel key material.\\
[-4pt]
\end{tabular}
\end{center}
\hspace*{0.25in} ($^*$ Our suggested improvement.)
\end{table}

{\mysaveA}
\section{Security Goals (Summary)}
{\mysaveAA}
\label{sec:goals-abbrev}

FIDO informally states several security goals that it aims to achieve 
via UAF channel binding~\cite{baghdasaryan2022fido},
which we summarize in Table~\ref{tbl:uafgoals} as SG-1, SG-11, SG-12, and SG-13.
In the strand space model, we formalize what we call Goal~1, which implies each of the four 
informally-stated goals.  Specifically, we state Goal~1 as an
\textit{injective agreement}~\cite{lowe1997hierarchy} on a session context between two complementary strands in a UAF strand space (see Section~\ref{sec:goals}).

\begin{table}[!h]   
\caption{Selected UAF security goals as informally stated by the FIDO Alliance~\cite{baghdasaryan2022fido}, for which FIDO specifies channel binding as a security measure.
The goals are vague when referring to notions such as ``high cryptographic strength'' or resilience, and the goals do not refer to specific variables or messages in the protocol.
We formalize these goals and focus on them in our analysis.}
\label{tbl:uafgoals}
\begin{center}
\begin{tabular}{p{0.1\linewidth}  p{0.8\linewidth}}
Goal ID & Description\\
\hline
SG-1 & \textbf{Strong User Authentication:} Authenticate (i.e. recognize) a user and/or a device to a relying party with high (cryptographic) strength.\\[4pt]
SG-11 & \textbf{Forgery Resistance:} Be resilient to Forgery Attacks (Impersonation Attacks). I.e. prevent attackers from attempting to modify intercepted communications in order to masquerade as the legitimate user and login to the system.\\[4pt]
SG-12 & \textbf{Parallel Session Resistance:} Be resilient to Parallel Session Attacks. Without knowing a user’s authentication credential, an attacker can masquerade as the legitimate user by creating a valid authentication message out of some eavesdropped communication between the user and the server.\\[4pt]
SG-13 & \textbf{Forwarding Resistance:} Be resilient to Forwarding and Replay Attacks. Having intercepted previous communications, an attacker can impersonate the legal user to authenticate to the system. The attacker can replay or forward the intercepted messages.\\
\end{tabular}
\end{center}
\end{table}

{\mysaveA}
\section{CPSA Analysis (Summary)}
{\mysaveAA}
\label{sec:analysis-abbrev}

Using CPSA, we analyze for each of our eight models of UAF (see Section~\ref{sec:cpsa-model-abbrev}) whether they achieve Goal~1 (see Section~\ref{sec:goals}).
We do so by
(1)~modeling UAF authentication and registration protocols in the strand space model, 
(2)~for each protocol, defining session contexts (see Section~\ref{ssec:context}), 
(3)~specifying origination assumptions for each of the protocol roles (see Section~\ref{ssec:origination}), 
(4)~using CPSA, search the strand space models to produce ``shapes'' (see Section~\ref{ssecCPSA}), 
(5)~using the shapes, prove Goal~1 (and its constituent sub-goals) true or false.

Each search assumes the perspective of a protocol role and its corresponding origination assumptions
(which might differ for different roles). 
CPSA searches exhaustively for unique shapes (possible protocol executions---see Section~\ref{ssecCPSA}) within a strand space, either proving session-context agreement 
or finding a counterexample (a possible attack).
These searches include intruder strands in strand spaces, where intruders may use
an unbounded number of sessions,
thereby considering all essentially different possible protocol interaction attacks.
For each search, CPSA terminated, 
thereby achieving provable definite results~\cite{liskov2011completeness}.
Table~\ref{tbl:goals} summarizes our findings.

\begin{table}[!h]   
\caption{Summary of our security analysis of our UAF models.
For each model, a check ({\checkmark}) indicates that all strands of the corresponding role (indicated by the column header) satisfy context-agreement theorems with a complementary strand under the role's security assumptions.
A crossmark ({\myno}) indicates that CPSA finds a counterexample to the security goal by
disproving context agreement.
Only a subset of models using TLS1.2-DH and TLS1.3 achieve context agreement from all perspectives: TLS1.2-DH with server side bindings (endpoint, server certificate), and TLS 1.3 (exporter).
These models succeed because they bind to the authenticated server and incorporate mutual freshness in the TLS session via a DH key exchange.}
\label{tbl:goals}
\begin{center}
\begin{tabular}{l|l|l|l|l}
Model & client-reg & server-reg & client-auth & server-auth\\
\hline
Baseline-NoUAF & \checkmark & {\myno} & \checkmark & {\myno}\\[4pt]
UAF-NoBinding-TLS1.2-RSA  & \checkmark & {\myno} & \checkmark & {\myno}\\
UAF-TokenBinding-TLS1.2-RSA & \checkmark & {\myno} & \checkmark & {\myno}\\
UAF-ChannelId-TLS1.2-RSA & \checkmark & {\myno} & \checkmark & {\myno}\\
UAF-Endpoint-TLS1.2-RSA & \checkmark & {\myno} & \checkmark & {\myno}\\
UAF-ServerCert-TLS1.2-RSA & \checkmark & {\myno} & \checkmark & {\myno}\\[4pt]
UAF-NoBinding-TLS1.2-DH  & \checkmark & {\myno} & \checkmark & {\myno}\\
UAF-TokenBinding-TLS1.2-DH & \checkmark & {\myno} & \checkmark & {\myno}\\
UAF-ChannelId-TLS1.2-DH & \checkmark & {\myno} & \checkmark & {\myno}\\
\textbf{UAF-Endpoint-TLS1.2-DH} & \checkmark & \checkmark & \checkmark & \checkmark\\
\textbf{UAF-ServerCert-TLS1.2-DH} & \checkmark & \checkmark & \checkmark & \checkmark\\[4pt]
UAF-NoBinding-TLS1.3 & \checkmark & {\myno} & \checkmark & {\myno}\\
\textbf{UAF-Exporter-TLS1.3 $^*$} & \checkmark & \checkmark & \checkmark & \checkmark\\[-4pt]
\end{tabular}
\end{center}
\hspace*{0.25in} ($^*$ Our suggested improvement.)
\end{table}

\begin{figure}[!b]
\includegraphics[width=\textwidth]{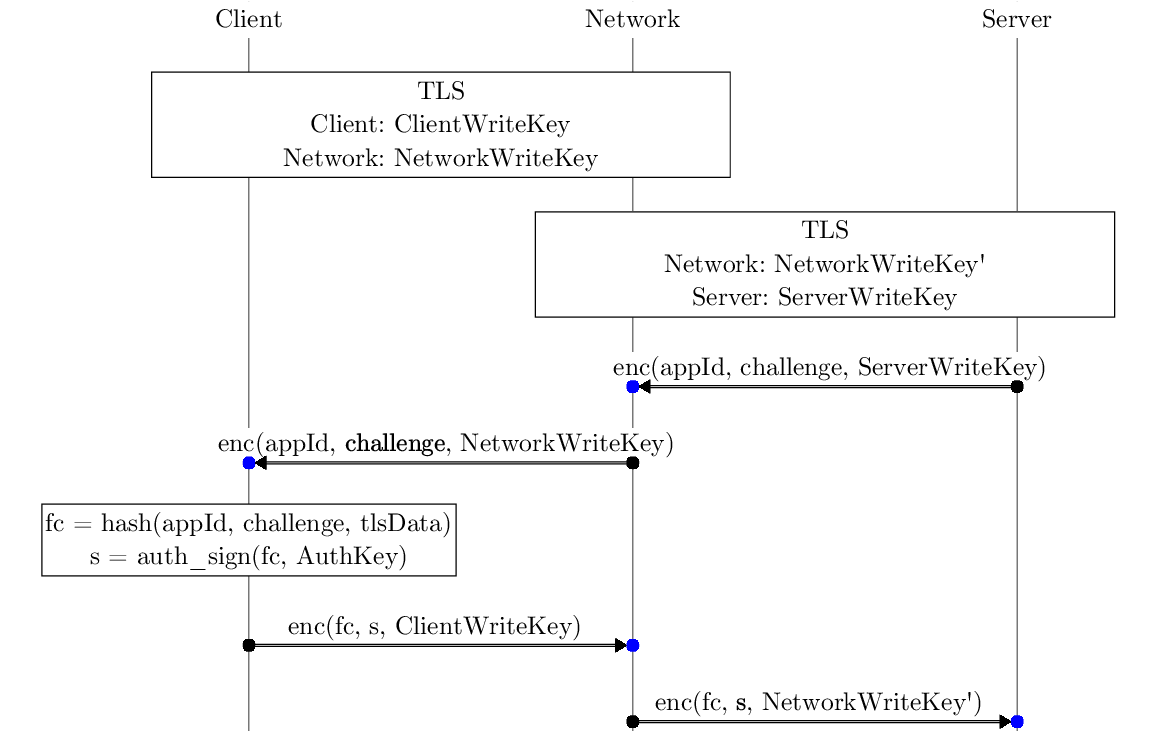}
\caption{
Challenge-reissuing attack during UAF authentication:
CPSA shape for UAF authentication under the server's cryptographic assumptions.
When the client omits channel binding, or includes one of the standard UAF bindings in \textit{tlsData}, the client acts as a confused deputy and produces an attestation for the adversary.
This shape reveals the following structural weakness: in two scenarios, the adversary can masquerade as the client to the server, or transplant messages from one session to another between the same endpoints.
In particular, the adversary can masquerade as the client when the client omits channel binding in the \textit{tlsData}, or when the server fails to verify this binding.
The adversary can transplant messages between such sessions because the client binds only to the endpoints of the channel and not to the session.
}
\label{fig:fido_attack}
\end{figure}


{\mysaveA}
\section{Vulnerabilities, Attacks, and Risks}
{\mysaveAA}
\label{sec:VAR}

We discuss vulnerabilities, attacks, and risks that arise from our analysis of UAF.

{\mysaveB}
\subsection{Vulnerabilities} 
{\mysaveC}
\label{ssecvulnerabilities}

We highlight three vulnerabilities resulting from how UAF binds (or fails to bind) the server's challenge.

\subsubsection{Binding to a TLS1.2-RSA channel is insufficient.}

Because the client is solely responsible for generating the premaster secret in TLS1.2-RSA, the server has no assurances that channel bindings are legitimate.
An adversary that learns or guesses the client's premaster secret can forge such a channel binding to fool the server.
Possible causes include the adversary compromising the client's machine, the client reusing premaster secrets, or an implementation error exposing the premaster secret to the adversary.

\subsubsection{The server cannot verify all channel bindings.}

Depending on which channel bindings a client applies to the final challenge parameters, the server may be unable to verify the bindings when processing an assertion.
For channel bindings relying on client-generated secrets, the server has no assurance that such secrets are not known to the adversary or are unique to the server's TLS session with the client.
When channel binding to the server's identity using TLS1.2-RSA, the server may be unable determine to which TLS channel a client's binding applies.
Additionally, if the client binds to a TLS channel between itself and a perimeter proxy, the server cannot verify the binding.
Under some circumstances, UAF allows a server to continue the protocol even when such bindings fail to verify.
We discuss the issue of such proxies in Section~\ref{ssec:dis-proxy}.

\subsubsection{Channel binding is optional.}

There is no requirement for certified UAF implementations to implement channel binding.
When the client omits a channel binding in the final challenge parameters, the server has no assurance that an adversary does not reissue its challenge in another session.
This scenario is equivalent to when a server fails to verify the channel bindings.

{\mysaveB}
\subsection{Attacks} 
{\mysaveC}
\label{ssecattacks}

Resulting from the weaknesses in UAF's channel binding, an adversary that extracts a challenge from a session of UAF between itself and a legitimate server may reissue this challenge in another session between itself and a legitimate client.
In this \textit{challenge-reissuing attack}, an adversary (acting as a MitM) fools honest clients into signing fraudulent assertions with their authenticators.
UAF registration and authentication feature a similar format for the challenge, assertion, and response, resulting in this attack applying to both protocols.
Figure~\ref{fig:fido_attack} illustrates a 
challenge-reissuing attack during UAF authentication.

Challenge reissuing results from a client's inability to identify for which session the server generates a challenge (see Section~\ref{ssecvulnerabilities}).
Because the server does not cryptographically bind the challenge to any session of UAF, the adversary may forward the challenge to a client who unwittingly uses it to craft an assertion, which the client's authenticator then signs.
UAF's mitigation against such an attack is channel binding, which is optional and may still enable an adversary to move challenges between different sessions involving the same TLS endpoints.
We implement this attack against eBay's implementation of UAF,
which chooses not to use channel binding, 
as allowed by the UAF standard (see Section~\ref{sec:attack}).

An adversary that tricks clients into signing fraudulent assertions can represent themselves as the client to a legitimate UAF server.
In registration, the adversary can register a client's authenticators with a relying party without the client's knowledge.
In authentication, the adversary authenticates as the legitimate client by sending to the server an assertion that the client's authenticator signs.
Without channel binding, the server cannot detect such an assertion.


{\mysaveB}
\subsection{Risks} 
{\mysaveC}
\label{ssecrisks}

We consider the risks posed by challenge reissuing to UAF's primary goal and consider the resulting harmful effects that threaten users and systems authenticating with UAF.

UAF's primary goal, authenticating users and devices to relying parties with a ``high degree of cryptographic strength,'' does not hold when missing or insufficient channel binding enables challenge reissuing.
An adversary can misrepresent themselves as a client and present fraudulent assertions signed by clients unaware that the challenge migrated between UAF sessions.
Even when binding is present, a server may disregard or fail 
to verify the binding correctly, amplifying the risk of this attack.

Because the adversary can fool a client into producing assertions for arbitrary challenges, the adversary can misrepresent themselves as a legitimate user and engage in transactions with the user's authorization.
UAF's design is general purpose, broadly applicable to many systems, and well adopted, enabling an adversary to engage in malicious transactions across a wide range of services (e.g., source control systems, E-commerce websites, online banking, health care, social media) that fail to mitigate challenge reissuing.
Such transactions can have devastating effects on individual users and organizations, which may be victims of identity theft, fraud, blackmail, and other criminal activities.

{\mysaveA}
\section{Attack Implementation}
{\mysaveAA}
\label{sec:attack}

To illustrate the importance of UAF's channel binding, we exploit missing channel binding to carry out a MitM attack against eBay's well-known open-source UAF server~\cite{eBayFIDO}, which is certified by FIDO and implements a subset of UAF 1.0. Our attack enables us to authenticate as an honest user without access to the user's authenticators.


To carry out the protocol, we implement a basic client in the Python programming language. Our client registers dummy authenticators with the server and responds to authentication requests with valid assertions from these authenticators.
We implement a malicious server as a process that passes messages between the client and the UAF server.
Our attack source code is available on GitHub~\cite{PALgithub}.

We assume: (1)~an honest user wishes to communicate with a DY adversary acting as a server; (2)~there is a legitimate server on the network; 
(3)~the adversary controls a subdomain under the server's URL (e.g., the adversary compromises one of the server's trusted facets~\cite{hill2018fidoappid}); and (4)~the client does not implement channel binding.
The adversary wishes to masquerade as the honest client to a legitimate server.

In addition to not supporting channel binding, the eBay FIDO UAF implementation fails to inspect the contents of an assertion's final challenge parameters, verifying only an authenticator's signature.
This egregious omission likely results in additional potential vulnerabilities beyond our work: an adversary can freely substitute challenge parameters without the server's knowledge.
For our attack, we assume only that the server does not enforce any channel binding.

The attack proceeds in two steps---registration and authentication---which exploit lack of binding of the UAF challenge. 

{\mysaveB}
\subsection{Step 1: Registration}
{\mysaveC}

In our attack's first phase, the adversary exploits missing binding to register the honest user's authenticator with the legitimate server, claiming to own this authenticator.

An honest user intentionally attempts to register this authenticator with the adversary, who masquerades as an honest server on the network.
The adversary then engages in two concurrent UAF registration protocols: one between itself and the user, and one between itself and an honest server, presenting the user with the honest server's challenge.
Subsequently, the user issues the challenge to their authenticator, 
which generates a 
{\it key registration data (KRD)} object, including an 
{\it authenticator attestation ID (AAID)}, 
a new public-private key pair, an attestation certificate, a pair of counters (registration and signature), and a hash of these parameters including the malicious challenge.
Using the private key, the authenticator signs the KRD and returns an assertion to the user, who forwards it to the adversary.
The adversary now claims the assertion to the legitimate server, thereby registering the user's authenticator without their knowledge.

{\mysaveB}
\subsection{Step 2: Authentication}
{\mysaveC}

The adversary engages in parallel UAF authentication protocols, transplanting the legitimate server's challenge and policy, which includes the user's authenticator maliciously registered by the adversary, into the session between the adversary and the user.
The user issues the challenge to the authenticator, which builds an authentication assertion,
which the adversary presents to the legitimate server as their own to complete the protocol 
and authenticate.
Claiming the user's identity, the adversary is now free to engage in malicious behaviors under the user's name.

{\mysaveA}
\section{Recommendations}
{\mysaveAA}
\label{sec:rec}

We recommend: (1)~For applications where it is critical to mitigate protocol interactions, the UAF standard requires the client (and potentially the server---see Section~\ref{sec:origin-binding}) to apply a channel binding to the challenge parameters.
(2)~The server should not accept attestations that bind to channels it cannot verify.
(3)~Applications should consider using TLS1.3 and implementing the TLS-exporter binding, which avoids the limitations of TLS channel bindings in UAF by binding to a specific TLS session rather than to one of the communication endpoints.
Applications should not bind to TLS1.2-RSA channels because these may be vulnerable.

{\mysaveA}
\section{Conclusion}
{\mysaveAA}
\label{sec:conclude}

Using CPSA, we performed a formal-methods analysis of channel binding in the FIDO UAF standard. From this analysis, we:
(1)~showed that several of FIDO UAF's channel bindings fail to mitigate protocol interaction, resulting in MitM attacks on the server's challenge.
(2)~implemented an attack against eBay's open-source FIDO UAF server, allowing an adversary to masquerade as a legitimate client.
(3)~recommended and formally verified improvements to TLS channel binding in the UAF standard.

Our attack exploits three limitations of channel binding in FIDO UAF:
(1)~channel binding is optional;
(2)~the server may accept attestations that bind to incorrect channels; and
(3)~UAF binds only to the protocol's \textit{endpoints}, not to the underlying TLS \textit{session}.

Although the UAF specification suggests that omitting channel binding may create a vulnerability (see Section~\ref{sec:fido}), to our knowledge, we are first to carry out a formal-methods analysis of channel binding in UAF and first to exhibit details of an attack on UAF that exploits the weaknesses of UAF's channel binding.
Previous studies of UAF did not analyze the standard's cryptographic channel binding. 
Policy makers should be aware that omitting channel binding, or accepting attestations that bind to an incorrect channel, creates a serious vulnerability in which the adversary can trick the client and authenticator to act as confused deputies to sign an authentication challenge for the adversary.

Despite decades of progress in protocol design, UAF fails to apply cryptographic binding consistently and correctly, enabling potential attacks by a DY adversary.
Our study of UAF channel binding 
illustrates the necessity of adopting formal-methods tools in protocol analysis---including analysis in the design process---and the need to develop and apply rigorous techniques to ensure that critical protocol values are always properly cryptographically bound to their session context.

{\mysaveA}
\section*{Acknowledgments}
{\mysaveAA}

We thank Joshua Guttman for helpful comments and 
valuable guidance on proving security goals using CPSA.
Our work evolves CPSA models of UAF registration created by Danning Liu~\cite{fidoreg}.
Thanks also to Ted Selker for suggesting ways to clarify the presentation.
Alan Sherman and Enis Golaszewski were supported in 2023 in part by the
National Security Agency under an INSuRE+C grant via Northeastern University, and
by the UMBC cybersecurity exploratory grant program.
Alan Sherman was also supported in part by
the National Science Foundation under
DGE grants 1753681 and 2438185 (SFS), 1819521 (SFS Capacity), and 2138921 (SaTC).

\bibliography{binding.bib}
\bibliographystyle{splncs04}

\vfill
\noindent 
({\today}) A shorter version of this paper appears in the \textit{Proceedings of Security Standardisation Research (SSR) 2025}, published by Springer in the LNCS series.

\clearpage
\appendix

\section{Acronyms and Abbreviations}
\label{sec:abbrevs}


\noindent \begin{tabular}{ll} 
AAID & Authenticator Attestation ID\\
API & Application Programming Interface\\
appID & Application Identifier\\
ASM & Authenticator-Specific Module\\
CA & Certificate Authority\\
CPSA    & Cryptographic Protocol Shapes Analyzer\\
DH     & Diffie-Hellman\\
DY   & Dolev-Yao\\
FIDO   & Fast Identity Online\\
KRD & Key Registration Data\\
MitM & man-in-the-middle\\
NS   & Needham-Schroeder\\
RSA & Rivest Shamir Adleman\\
TLS  & Transport Layer Security\\
UAF   & Universal Authentication Framework 
\end{tabular}

\section{A Structural Weakness of FIDO UAF}
{\mysaveAA}
\label{sec:weakness}

FIDO UAF has a fundamental structural flaw: with, or without channel binding (see Section~\ref{ssec:binding}),
the client cannot verify if the server-generated challenge originated from
the current session between the client and server. This flaw 
results from inadequate binding of the challenge to the session context.

As a result, in certain situations, a DY adversary is able to carry out
significant malicious actions by manipulating and transplanting 
parts of messages (e.g., containing the challenge) between different
protocol sessions.  We summarize three examples.

{\it Example 1}. When channel binding is not used, the adversary masquerades as
a legitimate bank server.  When the client initiates a session with the adversary,
the adversary launches a parallel session with the legitimate server.
The adversary obtains a challenge from the legitimate server and sends
it to the client.  Unable to verify the context of the challenge, the client
returns a signed attestation of the challenge to the adversary, who
passes it along to the server, thereby authenticating to the server as
the client.  In Section~\ref{sec:attack} we implement this MitM attack against 
eBay's open-source FIDO UAF server.

{\it Example 2}. Many organizations install a perimeter TLS proxy for the purpose
of monitoring traffic flows across their perimeters~\cite{o2016tls}.
To support this practice, when channel binding is used, FIDO UAF permits the server to accept channel bindings to the proxy rather than to the server.  
In the DY model, however, the
proxy might be malicious. Because the client cannot verify the source of the
challenge, the client cannot distinguish whether the challenge is from
a session with the server, a legitimate proxy, or a malicious proxy.
In this sense, FIDO UAF supports an adversary to carry out a MitM attack
between the client and the server.

{\it Example 3}. When channel binding is used with or without perimeter TLS proxies, a potential subtle vulnerability arises when the client carries out multiple sessions with the same server.  Because ``channel binding'' in UAF
binds only to the \textit{endpoints} of a channel, and not to the \textit{session}, 
potential threats arise in which the adversary manipulates messages
among the multiple sessions.  For example, suppose a client establishes
two concurrent sessions with an investment bank for the purpose of 
making a stock transaction in each session.  Although the server generates
a different challenge for each session, the adversary might be able to manipulate
the challenges, and the client's signed attestations of them,
to change the order of the transactions. Changing the order of transactions
can have significant consequences.  Carrying out this attack would require
dealing with other complexities, including the authenticator's signature counter,
if the authenticator has a signature counter.  Even if the authenticator has
a signature counter, it might be possible for the adversary to send the
client a policy that states that the adversary will accept only authenticators
that do not use signature counters.

What UAF does is for the server to send a challenge to the client,
the client to return a signed assertion of this challenge, and
the server to verify the signature and the contents of the assertion, including
the channel binding if present.  These actions, however, have serious
limitations.  The client cannot verify from what session the challenge
originates; the client can choose not to use channel
binding; the server may accept incorrect bindings; and channel
binding binds only the endpoints of the channel and not the session.
These limitations exist whether the authenticator is based on passwords
or biometrics.

UAF's failure to bind challenges at the server adequately suggests a need for formal verification of each component during the design process of a complex framework like UAF.
We discuss this necessity of formal verification in Section~\ref{ssec:formal-methods}.

\section{Background}
{\mysaveAA}
\label{sec:background}

We present brief background for cryptographic binding, formal-methods analysis of protocols, CPSA, and TLS, which we use in subsequent sections to perform a formal-methods analysis of FIDO UAF authentication.

{\mysaveB}
\subsection{Cryptographic Binding}
{\mysaveC}
\label{ssec:cryptobinding}

In 1996, Abadi and Needham~\cite{abadi1996prudent} presented informal guidelines for designing sound cryptographic protocols, including the need for explicit \textit{context} and \textit{cryptographic binding}.
Cryptographic binding associates sensitive data with a specific context, complicating the malicious act of transplanting data from one protocol context to another.
Digital certificates are a well-known example of binding: the certificate associates an entity's identifier with the entity's public key, bound by the digital signature of a trusted issuer.

Network protocols that fail to bind messages to a context are vulnerable to \textit{protocol interaction}~\cite{kelsey1997protocol}, in which an adversary uses messages between different protocols, or between different sessions of the same protocol, 
to produce undesirable outcomes.
Often, an adversary will produce interactions between two sessions of the same protocol, 
as in a MitM attack.
For example, Gavin Lowe's~\cite{Lowe1995} 1995 famous attack on the NS public-key protocol~\cite{Needham1978} exploits a lack of binding between a random nonce and its owner, enabling an adversary to misrepresent another communicant's nonce as their own.
A protocol that fails to bind messages may also permit an adversary to transplant data or entire messages to attack a 
separate protocol.
Protocol interaction can be mitigated by
binding cryptographic values to a specific protocol session.

{\mysaveB}
\subsection{Formal Methods for Protocol Analysis}
{\mysaveC}
\label{ssec:background-formal-methods}

Current tools for formal-methods analysis of cryptographic protocols include\linebreak 
ProVerif~\cite{blanchet2013automatic}, 
Tamarin Prover~\cite{meier2013tamarin}, 
Maude-NPA~\cite{escobar2009maude}, and 
CPSA~\cite{liskov2016cryptographic}.
These tools build on ideas of legacy tools, including NPA~\cite{meadows1996nrl}, Interrogator~\cite{millen1987interrogator}, and Scyther~\cite{cremers2008scyther}, and efforts such as Lowe's analysis of NS using FDR, a refinement checker for formal {communicating sequential processes} models~\cite{Lowe1995}.

ProVerif proves properties of Horn clauses modeling a protocol.
Tamarin Prover proves properties about multiset rewriting rules.
Maude-NPA searches backwards from attack states through unification.
By contrast, CPSA searches for protocol interactions, and upon termination, provably enumerates all essentially different protocol interactions for a protocol model~\cite{doghmi2007searching,liskov2011completeness}.

It is possible to carry out inductive security analysis of protocols using high-order logic theorem provers such as Isabelle~\cite{paulson1999inductive}.
This approach uses general interactive theorem provers, rather than specialized tools for protocol analysis.

As a ``model-finding tool,'' CPSA is capable of discovering protocol security goals given a protocol specification, a set of assumptions, and a partial execution.
As do other tools, CPSA searches for attacks over an unbounded number of parallel sessions.
Additionally, CPSA is also capable of verifying stated theorems within the model. 
For our analysis, we use CPSA because of our familiarity with the tool, our proximity to CPSA experts, and the tool's model-finding properties, which are useful for identifying protocol security goals and protocol interactions.

{\mysaveB}
\subsection{Strand Spaces} 
{\mysaveC}
\label{ssecstrands}

To enable analysis with CPSA, we formalize UAF protocols as strand spaces, which comprise protocol actors called \textit{strands}~\cite{fabrega1999strand}.
Strands are sequences of incoming and outgoing messages on a Dolev-Yao network.
Each message contains \textit{terms}, which represent message primitives such as identifiers, nonces, or encrypted information.
Causal relationships between messages that strands emit and receive form \textit{bundles}, which may express correct protocol executions between honest strands or corrupted executions involving \textit{penetrator} strands that model behaviors of a DY adversary.
Bundles must satisfy four requirements:
(1)~A bundle must be finite.
(2)~If a strand in a bundle receives a message, that message must be sent by some other unique strand in the bundle.
(3)~If some term in a strand is included in the bundle, any term that precedes it must also be present; and
(4)~the graph resulting from the bundle (using term sequences and message relationships as edges) must be acyclic.

From their specifications, we model protocols as strands and use CPSA to enumerate all essentially different bundles that include these strands.
Bundles are valuable tools for investigating attacks on protocols because they can describe corrupt protocol executions in which penetrator strands manipulate network traffic.
Penetrator strands can send and receive arbitrary messages, concatenate and separate message terms to synthesize new messages, and apply knowledge of secret keys to encrypt or decrypt information.
Additionally, penetrator strands can introduce new keys, identities, and information into the network to masquerade as honest network entities.
By inspecting bundles that express corrupt executions, the behaviors of penetrator strands provide clues that enable potential attacks against a protocol's security goals.

{\mysaveB}
\subsection{CPSA} 
{\mysaveC}
\label{ssecCPSA}

CPSA~\cite{liskov2016cryptographic}
is an open-source tool that analyzes cryptographic network protocols for protocol interactions.
In contrast to some formal-methods tools, CPSA is not merely a theorem-prover but a model finder.
For an input model, which comprises roles, messages, variables, and assumptions regarding those variables, CPSA outputs trees of graphical \textit{shapes} that illustrate possible protocol executions resulting from an unbounded number of sessions.
When CPSA terminates, it provably discovers all essentially different shapes 
for the input model~\cite{liskov2011completeness}, 
enabling users to inspect these shapes and identify all protocol interactions possible for the input model.

Users define CPSA models using LISP-like s-expressions that implement a custom language.
In these models, which superficially resemble (but are not) executable source code, users specify one or more roles, associated variables and messages, and \textit{skeletons}.
Skeletons specify one or more initial roles and impose assumptions on the role variables, such as forcing a value to originate uniquely in each session or being unavailable to the adversary.
When CPSA executes, it attempts to satisfy skeletons into shapes by repeatedly applying authentication tests in {strand space theory}~\cite{fabrega1999strand}.

Shapes, which CPSA represents graphically, consist of \textit{strands} that each represent a legitimate protocol role or an adversary's listener for key values.
Each strand consists of sequential \textit{nodes} that specify message transmission or reception events.
Connecting these nodes between different strands are two types of arrows: 
\textit{solid arrows} indicate a pair of message events, transmission and reception, 
for which CPSA can prove a causal relationship.
\textit{Dashed arrows} indicate a pair of message events for which CPSA, 
acting as a DY adversary, manipulates available information from the send event to satisfy the receive event.
When analyzing CPSA shapes, users take special note of dashed arrows 
because dashed arrows often suggest undesirable protocol interactions.

A user may specify security goals comprising predicates, 
which all output shapes must satisfy.
Specifying such security goals enables the user to prove authentication 
and other security properties of the protocol in a manner similar to that for other tools.

Section~\ref{ssec:method} outlines the workflow involving CPSA that we used to analyze FIDO UAF, which includes stating and proving (or disproving) security goals.
When inspecting CPSA output, we encourage readers to inspect our graphical ``.xhtml'' shapes files~\cite{PALgithub}.

{\mysaveB}
\subsection{Transport Layer Security (TLS)} 
{\mysaveC}
\label{ssecTLS}

{\it Transport Layer Security (TLS)}~\cite{rescorla2001ssl} is one of the most widely-deployed cryptographic protocols in the world, and is used by millions of systems to negotiate encrypted communication channels on insecure networks.
The most common application of TLS is to secure communication between a user and 
web server, enabling the user to communicate with the server with some guarantee of confidentiality and integrity.
To authenticate itself to clients, the server presents a certificate, signed by a certificate authority, that attests a public key associated with the server.
Because most clients do not possess or present client certificates, TLS often provides only one-way authentication, requiring the server to authenticate the client using an additional protocol that executes over the TLS channel.
In a mutual authentication variation of TLS known as \textit{mutual TLS}, or \textit{mTLS}, the client produces a certificate that enables the server to authenticate the client as part of the TLS handshake.

Protocols such as UAF assume execution over a pre-negotiated TLS channel, and may include provisions to bind messages and data to the channel using \textit{channel bindings}.
We give an overview of UAF's channel bindings in Section~\ref{ssec:binding}.

In our analysis, we consider both TLS 1.2 and TLS 1.3, which are actively used and supported in 2025.
For TLS 1.2, we consider two types of key exchange: RSA and DH.
In TLS 1.2 with RSA, the \textit{channel master secret} derives \textit{freshness}---uniqueness to a single protocol session---from three values: a server random nonce, a client random nonce, and a client premaster secret.
Under this scenario, the freshness of the master secret depends crucially on the freshness of the client's premaster secret.
Because the client may use a premaster secret that is not fresh to the session, an adversary may hijack a TLS 1.2 session when using RSA.
TLS 1.2 with DH, and TLS 1.3, improve the freshness guarantee of the master secret by replacing the RSA premaster secret with a DH key exchange, ensuring that both the client and the server can contribute freshness to a secret.


\clearpage
\section{Ladder Diagrams}
{\mysaveAA}
\label{sec:ladder}

\begin{figure}[!h]
\includegraphics[width=\textwidth]{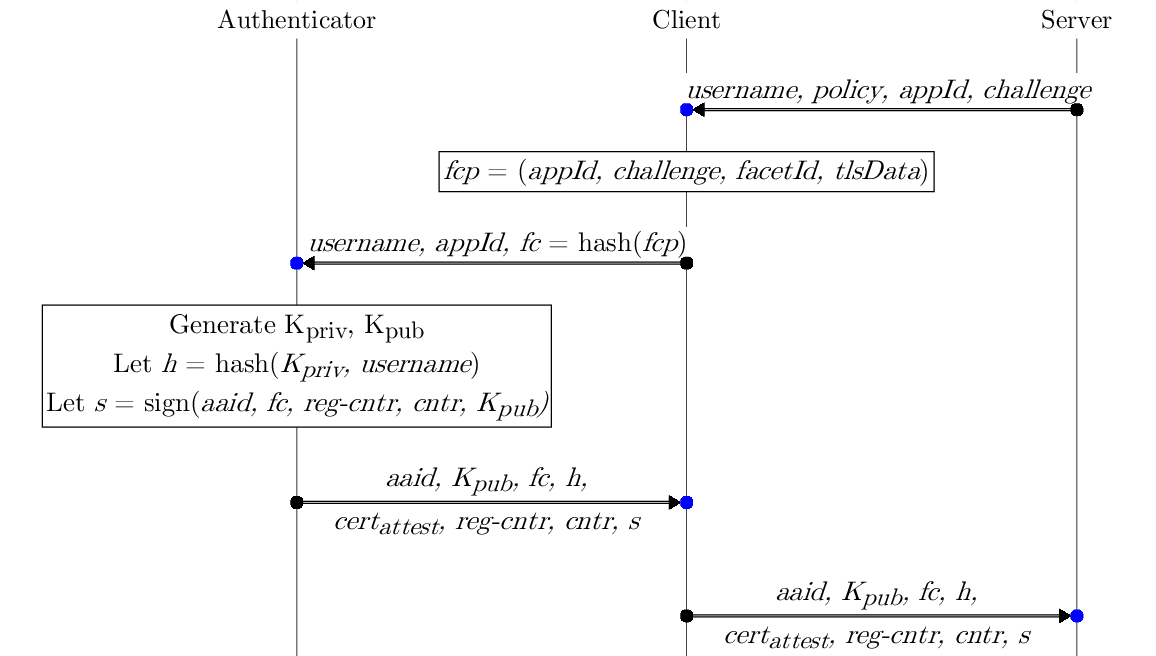}
\caption{
Idealized message sequence diagram of UAF registration over a network. Vertical lines correspond to protocol roles; horizontal arrows indicate message flows from one role to another; and 
arrow labels indicate message contents. Information in boxes specify local processing for each role. The client and server communicate over a pre-negotiated TLS channel. The client communicates with the authenticator via the \textit{Authenticator-Specific Module (ASM)} API. The authenticator signs $s$ using the attestation key corresponding to the attestation certificate that the manufacturer loads on the device.
}
\label{fig:uaf_reg}
\end{figure}

\clearpage
\begin{figure}[h]
\includegraphics[width=\textwidth]{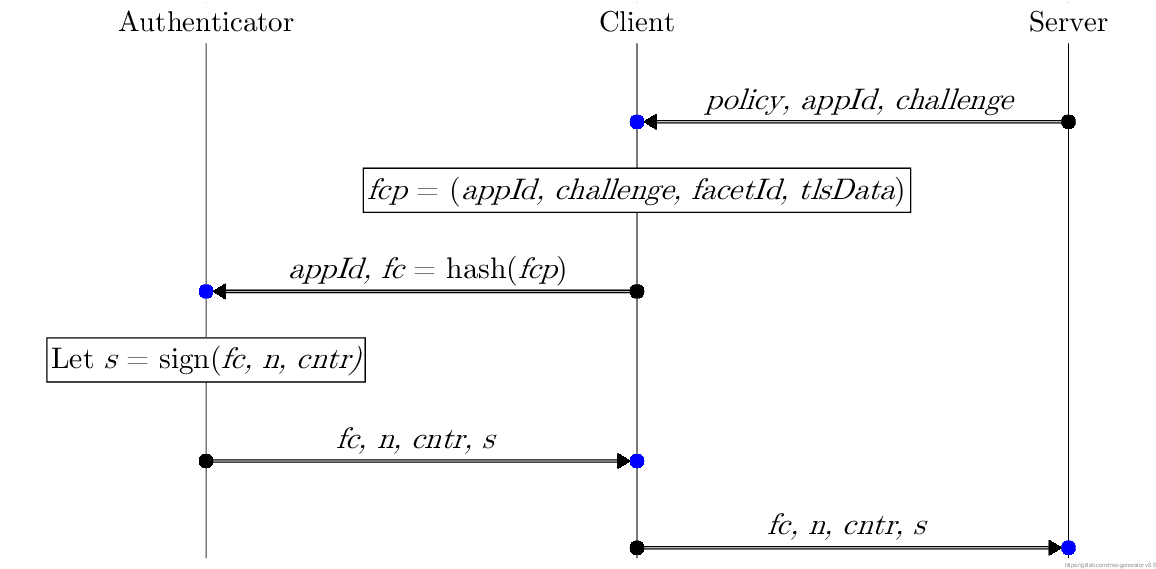}
\caption{
Idealized message sequence diagram of UAF authentication over a network. 
As in registration, the client and server communicate over a pre-negotiated TLS channel, and the client communicates with the authenticator using a local API. The authenticator signs $s$ using $K_{priv}$, which it recalls from a successful run of the registration protocol.
}
\label{fig:uaf_auth}
\end{figure}


\section{Models}
{\mysaveAA}
\label{sec:cpsa-model}

Using CPSA, we model variations of UAF that differ in their version of TLS, including TLS 1.2 with RSA and TLS 1.2 with DH key exchange, and the method by which they bind to the TLS channel.
Additionally, we model a baseline scenario that performs password-based authentication without UAF.
Table~\ref{tbl:models} summarizes these models.

In each UAF model, we specify client and server strands for the registration and authentication network protocols, embedding the authenticator into the client protocol role.
The UAF model names reflect the TLS version and the channel binding they incorporate, including a TLS~1.3 exporter binding that the UAF standard does not support (see Section~\ref{ssec:binding}).
Because TLS~1.3 does not officially support TLS~1.2 bindings, we omit such models.

Each of our models incorporates our CPSA models of TLS~1.2 (with RSA and DH key exchange) or TLS~1.3, 
which reflect the common use case in which a client authenticates a server via the server's certificate (these models are available in~\cite{PALgithub}).
Using CPSA, we verified the secrecy and authentication properties of our TLS~1.2 and 1.3 models, 
and found that our models are consistent with previous formal analyses of TLS~\cite{cremers2017comprehensive,tran2022formal}. 
As we explain in Section~\ref{sec:analysis}, some of the security limitations of UAF stem from
limitations of TLS (e.g., freshness of the premaster secret when using RSA).

Each UAF model comprises four strands: \textit{client-reg} and \textit{server-reg} strands to model registration, and \textit{client-auth} and \textit{server-auth} strands to model authentication.
Because these strands reside in the same strand space, our analysis 
includes possible attacks resulting from interactions between the UAF registration and authentication protocols.
We limit ourselves to three entities that directly create and process UAF protocol messages: client, server, and authenticator.

Instead of modeling authenticator strands, we embed the authenticator with the client, as is the case for a bound first-factor authenticator in UAF.
In our adversarial model, we do not consider an adversary that manipulates traffic between the client and the authenticator---rather, we analyze traffic that transmits over, and potentially binds to, a TLS channel between the client and the server (see Section~\ref{sec:adversarial_model}). 

Because we do not permit our adversary to clone legitimate authenticators, we omit the counters \hbox{$reg$-$cntr$} and $cntr$, which UAF specifies to mitigate this type of attack.
We assume that a legitimate server will never generate the same $challenge$ twice, thereby enabling it to detect any replayed attestation from the authenticator.
We also assume that the adversary cannot compromise a legitimate authenticator's attestation or authentication keys. For this reason, we do not model the attestation certificate during registration.
In light of these assumptions, omitting the counters 
does not permit any additional attacks within our scope.
Also, the specification provides insufficient details about many difficult issues, including synchronization, that can arise with counters.

{\mysaveB}
\subsection{CPSA Baseline Model} 
{\mysaveC}
\label{ssecbaseline-model}

For comparison, we begin with a \textit{baseline (no-UAF) model} in which a \textit{client} authenticates with a \textit{server} over a TLS channel in the absence of UAF.
This authentication takes place using a traditional username and password.
We model $client$ and $server$ as a pair of complementary strands and assume that they attempt to establish a TLS channel prior to the client's authentication messages.
We define this model's terms and strands:
\begin{itemize}
    \item Let $username$ be the user's identifier.
    \item Let $pw$ be the user's password.
    \item Let $swk$ be the server's TLS write key.
    \item Let $cwk$ be the client's TLS write key.
\end{itemize}

We now specify the messages for the \textit{Baseline-NoUAF} model.
The notation $\{term\}_{key}$ represents a term encrypted under a key.

\begin{enumerate}
    \item $client \longrightarrow server$: $\{username, password\}_{cwk}$
    \item $client \longleftarrow server$: $\{``auth OK''\}_{swk}$ -.
\end{enumerate}

Because the version of TLS does not affect the properties of this protocol, 
we present and analyze a baseline model only using TLS 1.2.

{\mysaveB}
\subsection{CPSA Models of Registration} 
{\mysaveC}
\label{ssecreg-model}

To model general registration, we specify a pair of complementary strands \textit{client-reg} and \textit{server-reg} that exchange terms reflecting those in UAF registration.
We assume that these strands first exchange terms modeling a TLS 1.2 or TLS 1.3 handshake, depending on the model's needs.
\begin{itemize}
    \item Let $username$, $appid$, and $aaid$ be names: $username$ specifies the user's account name, $appid$ specifies the server's name, and $aaid$ specifies the authenticator's name.
    \item Let $K_{pub}$ be an asymmetric authentication key, and $K_{priv}$ be the inverse of $K_{pub}$.
    \item Let $h = \hbox{hash}(K_{priv}, username)$.
    \item Let $fc = \hbox{hash}(appid, challenge, tlsData)$, where $tlsData$ is channel binding data.
    \item Let $s = \{aaid, fc, K_{pub}\}_{A}$, where $A$ is the authenticator's attestation signing key.
    \item Let $swk$ and $cwk$ be TLS server and client write keys.
\end{itemize}
We specify the registration protocol:
\begin{enumerate}
    \item $server\text-reg \longrightarrow client\text-reg$: $\{username, appid, challenge\}_{swk}$
    \item $server\text-reg \longleftarrow client\text-reg$: $\{h, aaid, K_{pub}, fc, s\}_{cwk}$.
\end{enumerate}

To model specific versions of UAF registration, we vary the construction of $tlsData$ to reflect a model's channel binding, and derive $swk$ and $cwk$ in a manner consistent with the model's TLS version.

{\mysaveB}
\subsection{CPSA Models of Authentication} 
{\mysaveC}
\label{ssecauth-model}

We model authentication in a manner similar to how we modeled registration, 
altering several terms and defining two new complementary strands: $client\text-auth$ and $server\text-auth$.
As in registration, we assume these strands,
prior to engaging in UAF authentication,
to transmit and receive terms that reflect a TLS handshake appropriate for the specific model.
In contrast with our registration model, we redefine $s = \{fc, n\}_{K_{priv}}$, where $n$ is a random nonce for the authentication protocol session.
We now specify the authentication protocol:
\begin{enumerate}
    \item $server\text-auth \longrightarrow client\text-auth$: $\{appid, challenge\}_{swk}$
    \item $server\text-auth \longleftarrow client\text-auth$: $\{fc, n, s\}_{cwk}$.
\end{enumerate}

As before, we vary the construction of $tlsData$ and match $swk$ and $cwk$ to reflect different choices for channel bindings and TLS versions.


\clearpage
\section{Security Goals}
{\mysaveAA}
\label{sec:goals}

In the strand space model, we formalize several UAF security goals as an \textit{injective agreement}~\cite{lowe1997hierarchy} on a session context between two complementary strands in a UAF strand space.
First, we introduce UAF security goals relevant to our study.
Second, we define \textit{session context agreement} as an injective agreement on the values and relationships of a protocol's session context variables.
Injective agreement means, in part, that the two roles
agree on the values of the specified variables. In addition, informally, it means
that this agreement was achieved through a unique proper execution of the
protocol, and not through an adverse interaction with a DY adversary.
Third, we define session contexts and cryptographic assumptions for our UAF models.

{\mysaveB}
\subsection{UAF Security Goals}
{\mysaveC}

FIDO states several security goals for which TLS channel binding serves as an explicit security measure~\cite{baghdasaryan2022fido}: 
SG-1 (user authentication), SG-11 (forgery resistance), SG-12 (parallel session resistance), and SG-13 (forwarding resistance).
Table~\ref{tbl:uafgoals} lists these goals.
FIDO describes these security goals informally, without
explicitly referencing authenticity goals for specific session variables.
As such, FIDO's statement of these goals is insufficient for any rigorous evaluation of UAF's authenticity properties.

In Section~\ref{sec:analysis}, we rigorously formalize UAF's SG-1, SG-11, SG-12, and SG-13 security goals collectively as
agreements between the server and the client on several session variables: 
username, server identifier, application ID, challenge, or authenticator public key.
Specifically, we collectively formalize these goals as an
injective agreement on the variables of a session context between a pair of complementary strands in a UAF strand space.
Because UAF protocols run over a TLS connection, there are additional implicit variables on which honest, complementary communicants must agree: server random nonce, client random nonce, premaster secret (RSA key exchange), or DH exponents.
As explained in Section~\ref{ssec:agreement}, this injective agreement on session context between complementary client and server strands implies that UAF achieves its informally stated security goals.


{\mysaveB}
\subsection{Session Context Agreement}
{\mysaveC}

First, we define (1)~session contexts, (2)~origination assumptions, and (3)~unique and successful completion.
Second, we combine these ideas to generalize a session context-agreement security goal.

\subsubsection{Session Context.}
\label{ssec:context}

Session contexts comprise variables that express a single, unique session of a protocol, such as the identifiers of the communicating parties, cryptographic nonces that provide the session's freshness, and other values, such as challenges, on which participants completing a protocol session must agree.
A well-constructed session context specifies the required relationships between two complementary protocol roles for a correct execution, and can be challenging to define when a protocol features many terms with potentially complex relationships.

For any arbitrary two-party protocol $\pi$ with a corresponding strand space $\Sigma_{\pi}$, a session context $Ctx[\pi]$ is a list of facts on which a pair of complementary strands $s\in\Sigma_{\pi}$ and $s'\in\Sigma_{\pi}$ must agree.
Facts reflect equivalences between values (e.g., $s$ and $s'$ must map a nonce $n$ to the same value) or relationships (e.g., $s$ generates a nonce $n$ for which $s'$ holds $\hbox{hash}(n)$).
If all facts in $Ctx[\pi]$ hold for a pair of strands $s$ and $s'$, then $s$ and $s'$ satisfy $Ctx[\pi]$.
Within a strand space, it is possible for $s$ and $s'$ to take on the value of any strands, including the case where $s = s'$, or the case where one of $s$ and $s'$ is an intruder strand.

\subsubsection{Origination Assumptions.}
\label{ssec:origination}

Origination assumptions are secrecy and freshness assumptions of a protocol’s variables, such as keys,
nonces, or passwords.  In the strand space model, these assumptions place constraints on the values  
penetrator strands may know or learn. 
\textit{Non-originating} variables (typically used for private keys)
cannot appear (i.e., be carried) as terms in any message, 
and are unavailable to the penetrator to know or learn, but are available to honest strands for
encrypting information. Several origination assumptions are available to constrain the origination of
sensitive values.  
\textit{Uniquely originating} variables (typically used for nonces)
generate on a single honest strand at the node where they
are first carried in a message. 
\textit{Uniquely generated} variables (typically used for DH exponents)
generate on a single honest strand at the node
where they first appear, but are not necessarily carried in a message. 
As such, they begin unknown to the adversary and the adversary cannot know
the values of uniquely originating or generated variables until after the node where they first generate. 
\textit{Penetrator non-originating} variables (typically used for passwords)
are unknown to the adversary, but unlike
non-originating terms, can be carried in a message and can be learned by the adversary.

Often, different protocol roles are unable to share the same 
assumptions---for example, a client may know that they generate a fresh nonce, but from a server's perspective, the client may reuse a previous nonce.
As a result, a protocol might achieve injective agreement on a term from one perspective, but not from another perspective.

\subsubsection{Successful and Unique Completion.}
 
We define two necessary predicates for injective agreement on the facts of a session context: \textit{successful completion} and \textit{unique completion}.
Together, these properties define an injective agreement: if both properties hold, two complementary strands complete a unique run of the protocol with each other and agree on the resulting session context.

Honest strands derive their terms from protocol \textit{roles}.
For an ideal execution, a strand that belongs to a role $\mathcal{R}_{a}\in\Sigma_{\pi}$ completes the protocol $\pi$ with a complementary strand that belongs to a role $\mathcal{R}_{b}\in\Sigma_{\pi}$.

For successful completion, two strands $s$ and $s'$ must execute the protocol with each other and agree on the resulting session context $Ctx[\pi]$ under a set of origination assumptions for the role of $s$, which we define as $Orig[\mathcal{R}_{a}]$, where $\mathcal{R}_{a}$ is the role of $s$.
Without an additional property specifying the protocol execution as unique, 
successful completion alone yields a non-injective agreement on the session context $Ctx[\pi]$.

In cases where $s$ transmits the final message, $s$ completes without assurance that $s'$ fully executes.
As a result, a bundle $\mathcal{B}$ may incorporate a strand $s'$ with the final node missing, which is a common occurrence when modeling protocols.
To describe this scenario, let $\mathcal{B}$-height be the node-height (number of terms transmitted and received) of a strand that appears in $\mathcal{B}$.
For any natural number $i$, a strand of integer height $i$ has sent or received at least $i$ of its terms.

\begin{definition}[Successful Completion]
\label{def:sc}
\hfill
\\
Let $\pi$ be any two-party protocol;
let $\mathcal{B}$ be any bundle in any strand space $\Sigma_{\pi}$; 
let $\mathcal{R}_{a}$ and $\mathcal{R}_{b}$ be roles in $\Sigma_{\pi}$;
let $s$ and $s^{\prime}$ be any pair of complementary strands in $\Sigma_{\pi}$;
let $Ctx[\pi]$ be any session context for $\pi$; and 
let $Orig[\mathcal{R}_{a}]$ be any set of origination assumptions for $s$.
Let $i$ and $j$ be any natural numbers.

$Succ(\mathcal{R}_{a}, \mathcal{R}_{b}, Ctx[\pi], Orig[\mathcal{R}_{a}], i ,j)$ \textit{if and only if (iff)} for all bundles $\mathcal{B}$ that yield from $\Sigma_{\pi}$ and strands $s \in \mathcal{R}_{a} \cap \mathcal{B}$, 
there exists a strand $s^{\prime} \in \mathcal{R}_{b} \cap \mathcal{B}$ such that under $Orig[\mathcal{R}_{a}]$, $s$ with $\mathcal{B}$-height $i$ and $s^{\prime}$ with $\mathcal{B}$-height $j$ satisfy $Ctx[\pi]$.
\end{definition}

Unique completion asserts that, if $\mathcal{B}$ contains executions of the strands $s$, $s'$, and $s''$ such that all strands agree on $Ctx[\pi]$, strands $s'$ and $s''$ must share role $\mathcal{R}_{b}$.
Because they share a role and agree on the same context, for the analysis we can assume $s' = s''$ and $s'$ is not an intruder strand.
Based on this observation, we can assert that only $s$ and $s'$ agree on $Ctx[\pi]$---in other words, 
$s$ and $s'$ are a unique completion of the protocol.

\begin{definition}[Unique Completion]
\label{def:uc}
\hfill
\\
Let $\mathcal{R}_{a}$ be any role in $\Sigma_{\pi}$;
let $Ctx[\pi]$ be any session context for $\pi$; and 
let $Orig[\mathcal{R}_{a}]$ be any set of origination assumptions for $s$.
Let $i$ and $j$ be any natural numbers.

$Uniq(\mathcal{R}_{a}, Ctx[\pi], Orig[\mathcal{R}_{a}], i ,j)$ iff for all roles $\mathcal{R}_{b}, \mathcal{R}_{c}$ in $\Sigma_{\pi}$, \\ $Succ(\mathcal{R}_{a}, \mathcal{R}_{b}, Ctx[\pi], Orig[\mathcal{R}_{a}], i ,j) \wedge Succ(\mathcal{R}_{a}, \mathcal{R}_{c}, Ctx[\pi], Orig[\mathcal{R}_{a}], i ,j) \implies$ $\mathcal{R}_{b} = \mathcal{R}_{c}$.
\end{definition}

\subsubsection{Session Context Agreement.}
\label{ssec:agreement}

We now define \textit{session context-agreement} (Goal~\ref{goal:context-agreement}) as a conjunction of unique and successful completion.
Using Goal~\ref{goal:context-agreement}, which we model explicitly in CPSA, we prove or disprove injective agreement of $Ctx[\pi]$ between the roles $\mathcal{R}_{a}$ and $\mathcal{R}_{b}$ under origination assumptions $Orig[\mathcal{R}_{a}]$ within a strand space $\Sigma_{\pi}$.
For any strand space $\Sigma_{\pi}$, if CPSA terminates and produces a counterexample to Goal~\ref{goal:context-agreement}, the counterexample describes an attack on the protocol.
In Section~\ref{sec:analysis}, we analyze our baseline and UAF models for session context agreement.

As a comprehensive authentication goal, Goal~\ref{goal:context-agreement} implies 
all of the weaker, informal UAF security goals (see Table~\ref{tbl:uafgoals}) in accordance with the Lowe authentication hierarchy~\cite{lowe1997hierarchy}.
Models that satisfy Goal~\ref{goal:context-agreement} express a one-to-one correspondence between the client and the server's UAF session context variables, implying SG-1.
Additionally, the SG-11, SG-12, and SG-13 UAF goals specify resistance against an adversary that impersonates a legitimate user via executing parallel sessions, forging messages, or forwarding existing messages.
Because Goal~\ref{goal:context-agreement} assumes an unbounded number of sessions, Goal~\ref{goal:context-agreement} implies SG-11.
Because the UAF strand space includes penetrator strands that forge and forward messages, 
Goal~\ref{goal:context-agreement} also implies SG-12 and SG-13.

\begin{goal}[Session Context Agreement]
\label{goal:context-agreement}
\hfill
\\
Let $\Sigma_{\pi}$ be any strand space;
let $\mathcal{R}_{a}$ and 
$\mathcal{R}_{b}$ be any complementary roles in $\Sigma_{\pi}$;
let $Orig[\mathcal{R}_{a}]$ be any set of origination assumptions for role $\mathcal{R}_{a}$;
and let $Ctx[\pi]$ be any session context for $\pi$.
For any role $\rho$, let $height(\rho)$ be the height of any fully executing strand in role $\rho$.
Let $i$ and $j$ be natural numbers such that $i = height(\mathcal{R}_{a})$ and $j \leq height(\mathcal{R}_{b})$.
\\\\
Session context agreement means $\exists j$ such that \\
$Succ(\mathcal{R}_{a}, \mathcal{R}_{b}, Ctx[\pi], Orig[\mathcal{R}_{a}], i, j) \wedge Uniq(\mathcal{R}_{a}, Ctx[\pi], Orig[\mathcal{R}_{a}], i, j)$.
\end{goal}

{\mysaveB}
\subsection{UAF Session Contexts}
{\mysaveC}

To define UAF session contexts, we identified key terms on which honest communicants must agree to achieve the UAF security goals, including terms crucial for a correct TLS session.
For comparison, we include a session context for the baseline model.
Because UAF session contexts depend on TLS session contexts, we begin by defining TLS session contexts.

\subsubsection{TLS Session Contexts.}

We first specify a session context for TLS-1.2-RSA, followed by a session context for TLS using ephemeral DH key exchange (TLS-1.2-DH, TLS-1.3).
The TLS contexts are interchangeable parameters for the baseline context and the UAF contexts that follow.
Both TLS contexts specify agreement for the client and server random nonces.
The TLS-1.2-RSA context specifies an additional condition on which communicants must agree involving the premaster secret.
To reflect a successful DH key exchange, the TLS-DH context specifies that communicants must agree on each communicant's private DH exponent.
It is possible for strands to agree on values that they do not know, such as the private exponent corresponding to a communicant's public DH value.

\begin{context}[TLS1.2-RSA]
\label{ctx:tls-1.2-rsa}
\hfill
\\
Let $s$ and $s'$ be complementary strands that carry out a TLS 1.2 handshake with a RSA key exchange.
For any pair $(s, s')$, the following facts hold:
strands $s$ and $s'$ agree on the identifiers for the session's server ($server$) and the certificate authority ($ca$) that signed the server's certificate.
Strands $s$ and $s'$ agree on the client's random nonce ($cr$), the server's random nonce ($sr$), and the client-generated premaster secret ($pms$).
\end{context}

\begin{context}[TLS-DH]
\label{ctx:tls-dh}
\hfill
\\
Let $s$ and $s'$ be complementary strands that carry out a TLS 1.2 or TLS 1.3 handshake with an ephemeral DH key exchange.
For any pair $(s, s')$, the following facts hold:
strands $s$ and $s'$ agree on the identifiers for the session's server ($server$) and the certificate authority ($ca$) that signed the server's certificate.
Strands $s$ and $s'$ agree on the client's random nonce ($cr$), the server's random nonce ($sr$), the client's DH exponent ($x$), and the server's DH exponent ($y$).
\end{context}

\subsubsection{Baseline Session Context.}

Incorporating our TLS contexts, we now specify the baseline session context (see Session Context~\ref{ctx:baseline}) for traditional, password-based authentication across a TLS channel.
In 2024, the baseline reflects a still prevalent manner by which clients authenticate to websites and other web-based services.
Successful authentication between a client and a server in this baseline implies that a client strand and a server strand agree on the user's identifier, the server's identifier (e.g., URL), the user's password, and the cryptographic details of the underlying TLS session.

\begin{context}[No-UAF]
\label{ctx:baseline}
\hfill
\\
Let $s$ and $s'$ be any strands in the \textit{Baseline-NoUAF} strand space.
For any pair $(s, s')$, the following facts hold:
strands $s$ and $s'$ agree on a TLS session context [TLS-1.2-RSA or TLS-DH].
$s$ and $s'$ agree on the terms $username$, $server$, and $pw$,
where $pw$ is the password.
\end{context}

\subsubsection{UAF Registration Session Context.}

We now define the UAF registration session context.
Similar to the baseline session context, this context require strands to agree on an appropriate TLS session context---the TLS context varies depending on which version of TLS our models assume.
Successful registration requires client and server strands participating in the same session to agree on the user's identifier, the server's cryptographic challenge, the server's identifier, and the $appId$.

\begin{context}[UAF Registration]
\label{ctx:registration}
\hfill
\\
Let $s$ and $s'$ be any strands in any UAF strand space.
For any pair $(s, s')$, the following facts hold:
strands $s$ and $s'$ agree on a TLS session context.
$s$ and $s'$ agree on the terms $challenge$, $username$, $server$, and $appid$.

\end{context}

\subsubsection{UAF Authentication Session Context.}

Because the authentication protocol is simpler, we define UAF authentication session context
as a subset of the registration context's facts.
Specifically, a client and a server strand for authentication must agree only on the server identifier, challenge, and $appId$.

\begin{context}[UAF Authentication]
\label{ctx:authentication}
\hfill
\\
Let $s$ and $s'$ be any strands in any UAF strand space.
For any pair $(s, s')$, the following facts hold:
$s$ and $s'$ agree on a TLS session context.
$s$ and $s'$ agree on the terms $challenge$, $server$, and $appid$.
\end{context}

{\mysaveB}
\subsection{UAF Origination Assumptions}
{\mysaveC}

We specify \textit{origination assumptions} (see Section~\ref{ssec:origination}) from the perspectives of the client and server roles for the baseline, UAF registration, and UAF authentication protocols.
In Section~\ref{sec:analysis}, we use these assumptions to analyze the baseline, UAF authentication, and UAF registration protocols.
Several assumptions apply only to a subset of the models we analyze, and carry labels to indicate such.

\begin{assumptions}[Client Assumptions]
\label{per:client}
\hfill
\\
For any initial client strand $\mathcal{C}$ in any bundle, we make the following assumptions:\\
(1)~Private keys of legitimate parties are unknown (non-originating) to the adversary and are never transmitted on the network.\\
(2)~$\mathcal{C}$ generates (uniquely originates) a fresh client-random nonce for TLS, and [for TLS-1.2-RSA] a fresh premaster secret or [for TLS-1.2-DH or TLS-1.3] a fresh DH value.\\
(3) [Baseline model only] The adversary does not know the client's password $pw$ (penetrator non-originating).\\
\end{assumptions}

\begin{assumptions}[Server Assumptions]
\label{per:server}
\hfill
\\
For any initial server strand $\mathcal{S}$ in any bundle, we make the following assumptions:\\
(1) Private keys of legitimate parties are unknown to the adversary and are never transmitted on the network.\\
(2) $\mathcal{S}$ generates a fresh server-random nonce for TLS, and [for TLS-1.2-DH or TLS-1.3] a fresh DH value.\\
(3) [Baseline model only] The adversary does not know the client's password $pw$ (penetrator non-originating).\\
(4) [UAF models only] The server uniquely originates the $challenge$ for each session.\\
(5) [UAF registration only] The server uniquely originates the client's username.\\
\end{assumptions}


\clearpage
\section{CPSA Analysis}
{\mysaveAA}
\label{sec:analysis}

We overview our analysis method and then state and prove session context-agreement theorems for our UAF registration and authentication models.

{\mysaveB}
\subsection{Method} 
{\mysaveC}
\label{ssec:method}

We analyze UAF by 
(1)~modeling UAF authentication and registration protocols in the strand space model, 
(2)~for each protocol, defining session contexts, 
(3)~specifying origination assumptions for each of the protocol roles, 
(4)~using CPSA, search the strand space models to produce ``shapes'' (see Section~\ref{ssecCPSA}), 
(5)~using the shapes, prove these goals true or false.
Our analysis builds from the models in Section~\ref{sec:cpsa-model}, 
session contexts from Section~\ref{ssec:context}, 
origination assumptions for the client and server role perspectives from Section~\ref{ssec:origination}, and successful session context-agreement (Goal~\ref{goal:context-agreement}).

In this section, we prove or disprove session context agreement by using CPSA 
to search exhaustively for unique shapes within a strand space.
Each search assumes the perspective of a role and its corresponding origination assumptions. Each search outputs shapes that support or serve as a counterexample to a session context agreement security goal.
When searching, CPSA includes intruder strands in strand spaces.

For comparison, our analysis includes baseline authentication without UAF (Section~\ref{sec:cpsa-model}).


{\mysaveB}
\subsection{Analysis of Registration} 
{\mysaveC}
\label{ssec:reg-analysis}

We specify and prove theorems for UAF registration.
Table~\ref{tbl:goals} summarizes our results. 

Theorems \ref{theorem:uaf-reg-server-false} and \ref{theorem:uaf-reg-server-true} address UAF registration from the perspective of a server.
Theorem~\ref{theorem:uaf-reg-server-false} asserts 
that the session context-agreement Goal~\ref{goal:context-agreement} is false.
Theorem~\ref{theorem:uaf-reg-server-true} applies to \textit{UAF-Endpoint-TLS1.2-DH}, \textit{UAF-ServerCert-TLS1.2-DH}, and \textit{UAF-Exporter-TLS1.3}, for which Goal~\ref{goal:context-agreement} holds under the server's perspective and origination assumptions.
We refer to multiple models that model the same channel binding (e.g., \textit{UAF-NoBiding-TLS1.2-RSA} and \textit{UAF-NoBinding-TLS1.3}) by substituting a wildcard (*) for the TLS version (e.g., \textit{UAF-NoBiding-*}).

\begin{theorem}
\label{theorem:uaf-reg-client}
For any initial client-reg role holding the client origination assumptions, and 
for any complementary server-reg role in the [\textit{UAF-NoBinding-*, UAF-TokenBinding-*, UAF-ChannelId-*, UAF-Endpoint-*, UAF-ServerCert-*, UAF-Exporter-TLS1.3}] strand spaces, Goal~\ref{goal:context-agreement} is true.
\end{theorem}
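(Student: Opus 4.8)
The plan is to prove Theorem~\ref{theorem:uaf-reg-client} by unwinding the definition of Goal~\ref{goal:context-agreement} from the client's perspective and discharging the two conjuncts---successful completion (Definition~\ref{def:sc}) and unique completion (Definition~\ref{def:uc})---via exhaustive CPSA shape analysis. First I would fix an arbitrary initial \emph{client-reg} strand $\mathcal{C}$ of full height $i = height(\mathcal{R}_a)$ in an arbitrary bundle $\mathcal{B}$, under the Client Assumptions (Assumption~\ref{per:client}): private keys of legitimate parties are non-originating, and $\mathcal{C}$ uniquely originates a fresh client random nonce together with a fresh premaster secret (RSA) or fresh DH value (DH/1.3). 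These assumptions are exactly what drive CPSA's authentication tests. I would then seed a CPSA skeleton with $\mathcal{C}$'s point of view for each of the listed models and let CPSA enumerate all essentially different shapes; by the completeness result~\cite{liskov2011completeness} cited in the excerpt, termination yields \emph{all} such shapes.

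The core of the \emph{successful completion} argument is that the client's freshness contributions to the TLS handshake force any bundle containing $\mathcal{C}$ to also contain a complementary honest \emph{server-reg} strand $\mathcal{S}$ agreeing on the TLS session context (Session Context~\ref{ctx:tls-1.2-rsa} or~\ref{ctx:tls-dh}) and, through the encrypted messages keyed by $swk,cwk$, on the UAF registration context variables $challenge$, $username$, $server$, and $appid$ (Session Context~\ref{ctx:registration}). The key observation is that, from the client's perspective, the client is the one supplying the crucial fresh secret: in TLS-1.2-RSA the premaster secret originates uniquely on $\mathcal{C}$, and in the DH variants the client's exponent does. Because the write keys derive from this client-contributed freshness, the adversary cannot synthesize the encrypted $\{username, appid, challenge\}_{swk}$ reception at $\mathcal{C}$ without a genuine server strand having produced it under the same derived keys; CPSA should discharge this by a regular authentication test on the key-derivation term. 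This is precisely why the client column in Table~\ref{tbl:goals} is uniformly a checkmark even where the server column fails: the asymmetry is that server-side agreement depends on the \emph{client's} freshness, which the client itself can guarantee but the server cannot.

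For \emph{unique completion} (Definition~\ref{def:uc}), I would argue that any two server-reg strands $\mathcal{S}'$, $\mathcal{S}''$ in $\mathcal{B}$ that both satisfy $Ctx[\pi]$ with $\mathcal{C}$ must coincide: since both share the derived write keys pinned to $\mathcal{C}$'s unique fresh secret and agree on all context variables, CPSA collapses them into a single regular strand, so no distinct second responder exists. Concretely, I would inspect each output shape and verify that every inbound edge carrying the server's registration message traces back to one honest server strand via a solid (causal) arrow rather than a dashed (adversary-manipulated) one; a single responder strand across all shapes establishes $\mathcal{R}_b = \mathcal{R}_c$. Taken together, $Succ \wedge Uniq$ yields the injective agreement demanded by Goal~\ref{goal:context-agreement}.

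I expect the main obstacle to be handling the interaction between the embedded TLS sub-models and the outer UAF strands \emph{uniformly} across six distinct binding regimes---token binding, channel-ID binding, endpoint, server-certificate, no-binding, and exporter---each over potentially different TLS versions. The delicate point is confirming that, from the client's vantage, the freshness the client injects into the handshake is sufficient to authenticate the responder \emph{regardless} of which $tlsData$ construction enters $fc = \hbox{hash}(appid, challenge, tlsData)$, including the degenerate no-binding case. I would address this by noting that the client's agreement does not actually rely on the binding value at all: it rides entirely on the client-contributed handshake secret securing $swk,cwk$, so the same authentication-test argument goes through verbatim for every listed model, with the binding term merely passing through the hash. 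The remaining care is bookkeeping---ensuring each model's skeleton carries the correct TLS context and that CPSA in fact terminates on each, which the excerpt already reports it does.
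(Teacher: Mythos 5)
Your proposal is correct and follows essentially the same route as the paper: the paper's proof is exactly a CPSA enumeration from the client's point of view under the client origination assumptions, observing that CPSA terminates with a single shape (the ideal execution) for every listed model, so by completeness no counterexample exists. Your additional mechanism-level explanation---that client-side agreement rides on the client-contributed handshake freshness (premaster secret or DH exponent) and the server's non-originating private key, independently of the \textit{tlsData} construction---is a faithful account of why the enumeration comes out this way, and matches the asymmetry the paper uses elsewhere to explain the failing server-side cases.
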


\begin{proof}[{\bf Proof} (By enumeration).]
For each UAF model, CPSA terminates and discovers a single shape that reflects an ideal execution between a client and a server, satisfying Goal~\ref{goal:context-agreement}.
Because CPSA terminated and discovered all essentially different shapes for these models, no counterexample exists.
\end{proof}

\begin{theorem}
\label{theorem:uaf-reg-server-false}
For any initial server-reg role holding the server origination assumptions, and 
for any complementary client-reg role in the [\textit{UAF-NoBinding-*, UAF-TokenBinding-*, UAF-ChannelId-*, UAF-Endpoint-TLS1.2-RSA, UAF-ServerCert-TLS1.2-RSA}] strand spaces, Goal~\ref{goal:context-agreement} is false.
\end{theorem}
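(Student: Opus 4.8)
The plan is to disprove Goal~\ref{goal:context-agreement} by CPSA counterexample, running the same enumeration argument as Theorem~\ref{theorem:uaf-reg-client} in the opposite direction. For each listed model I would search from the perspective of an initial \textit{server-reg} strand under the server origination assumptions (Assumption~\ref{per:server}), taking the UAF registration context (Context~\ref{ctx:registration}), with its embedded TLS context, as the target. Because CPSA enumerates all essentially different shapes and terminates (Section~\ref{ssecCPSA}), it suffices to exhibit for each model a single terminating shape that violates the conjunction $Succ \wedge Uniq$; completeness of the search then rules out any repairing choice of $j$, so Goal~\ref{goal:context-agreement} fails. In every case the decisive fact is that no \emph{honest} complementary \textit{client-reg} strand agrees with the server on the TLS component of the context, so Successful Completion (Definition~\ref{def:sc}) already fails.

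The counterexamples split into two families. For \textit{UAF-NoBinding-*}, \textit{UAF-TokenBinding-*}, and \textit{UAF-ChannelId-*}, the server's fresh challenge is either unbound or bound only to a client-generated key that the server does not originate and that the adversary may hold; nothing ties the challenge to the server's TLS session. The shape CPSA returns is the challenge-reissuing configuration of Figure~\ref{fig:fido_attack}: since the attestation $s = \{aaid, fc, K_{pub}\}_{A}$ cannot be forged, an honest \textit{client-reg} strand must appear, but the adversary has routed the server's challenge to that client in a \emph{distinct} TLS session, so the client and server disagree on the TLS terms ($cr$, $sr$, and $pms$ or the DH exponents). No honest client strand satisfies the full context with the server, witnessing $\neg Succ$ and disproving the goal.

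The second family, \textit{UAF-Endpoint-TLS1.2-RSA} and \textit{UAF-ServerCert-TLS1.2-RSA}, is the main obstacle, because here the binding \emph{does} correctly fix the server's identity, so a naive reading suggests agreement should hold; this is exactly why the matching DH variants instead satisfy the goal in Theorem~\ref{theorem:uaf-reg-server-true}. The subtlety I would exploit is that TLS-1.2-RSA derives all session freshness from the client-chosen premaster secret (Context~\ref{ctx:tls-1.2-rsa}), to which the server contributes nothing it can sign, and that under the server's assumptions (Assumption~\ref{per:server}) the premaster is \emph{not} constrained to be fresh or secret. I would therefore look for a shape in which the adversary, knowing a reused premaster, completes an RSA handshake toward the honest client while presenting the genuine server's certificate, thereby producing a session the client believes is with the server yet that is decoupled from the server's actual session. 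The endpoint binding $\hbox{hash}(server)$ still verifies because it pins only the endpoint and not the session, but the honest client again fails to agree with the server on the TLS context, so $\neg Succ$ holds. The crux of the write-up will be to trace precisely which TLS-context fact the shape breaks and to confirm that the counterexample hinges on the client-controlled premaster, the feature the DH variants remove by having the server contribute and sign a fresh DH value.
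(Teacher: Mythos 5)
Your proposal is correct and takes essentially the same approach as the paper: both disprove Goal~\ref{goal:context-agreement} by exhibiting CPSA counterexample shapes from the \textit{server-reg} perspective under the server's origination assumptions, with the challenge-reissuing shape (Figure~\ref{fig:fido_attack}) covering the unbound and client-bound models, and a session-migration shape covering \textit{UAF-Endpoint-TLS1.2-RSA} and \textit{UAF-ServerCert-TLS1.2-RSA}. Your diagnosis of the second family via the client-controlled, unconstrained premaster secret is consistent with the paper, which notes the same RSA shape (the server has no assurance of a unique premaster secret) and attributes the success of the DH variants in Theorem~\ref{theorem:uaf-reg-server-true} to exactly the mutual-freshness contribution you identify, although the paper's proof phrases the endpoint/server-certificate failure chiefly as the binding fixing only the endpoint rather than the specific TLS session.
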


\begin{proof}[{\bf Proof} (By counterexample).]
For each model, finds a shape that contradicts Goal~\ref{goal:context-agreement}.
Moreover, CPSA fails to find any shape that satisfies Goal~\ref{goal:context-agreement}.

Under the server assumptions, each UAF-NoBinding-* model produces a shape that illustrates a similar issue: because the server fails to bind its challenge to a session, the adversary may reissue the server's challenge to a client responding in a different session.
The server binds the challenge to the relying party's \textit{appId}, which is not a unique value. As we discuss in Section~\ref{sec:adversarial_model}, is possible for an adversary to spoof the \textit{appId} to a client.
An additional shape when using TLS with RSA indicates that the server additionally has no assurance that the client generates a unique premaster secret, enabling the adversary to compromise the channel.

Similarly, the UAF-TokenBinding-* and UAF-ChannelId-* models produce shapes that fail to satisfy the goal.
Because the server does not assume that the client's binding key is unique to the TLS session with the server, the server has no assurance that the adversary did not reissue the challenge in a separate UAF session.
The server cannot know if a client reused a binding key or if the adversary compromised the key.

The \textit{UAF-Endpoint-TLS1.2-RSA} and \textit{UAF-ServerCert-TLS1.2-RSA} models, in which the client binds the challenge parameters to the server's certificate, present several shapes that illustrate a more subtle issue: the binding does not bind to a specific session of UAF registration.
While there is some guarantee that the client intends to communicate with the server, it is possible for the client to confuse the session for which the server issues a challenge.
As in the other models, this issue arises from the server failing to bind the challenge to the underlying TLS session.
\end{proof}

\begin{theorem}
\label{theorem:uaf-reg-server-true}
For any initial server-reg role holding the server assumptions, and 
for any complementary client-reg role in the [\textit{UAF-Endpoint-TLS1.2-DH, UAF-ServerCert-TLS1.2-DH, UAF-Exporter-TLS1.3}] strand spaces, Goal~\ref{goal:context-agreement} is true.
\end{theorem}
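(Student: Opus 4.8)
The plan is to prove this by enumeration, mirroring Theorem~\ref{theorem:uaf-reg-client} but now from the server's vantage point. First I would run CPSA from an initial server-reg strand under the server origination assumptions (Assumption~\ref{per:server}), separately for each of the three strand spaces \textit{UAF-Endpoint-TLS1.2-DH}, \textit{UAF-ServerCert-TLS1.2-DH}, and \textit{UAF-Exporter-TLS1.3}. Since every search in this study terminates, CPSA provably returns all essentially different shapes. The aim is to verify that each returned shape exhibits a unique complementary client-reg strand that, at heights $i = height(\mathcal{R}_{a})$ and some $j \leq height(\mathcal{R}_{b})$, agrees on the full UAF registration context (Session Context~\ref{ctx:registration})---the TLS-DH context together with $challenge$, $username$, $server$, and $appid$---thereby establishing both $Succ$ and $Uniq$ and hence Goal~\ref{goal:context-agreement}.

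The heart of the argument, and where these models diverge from the failing cases of Theorem~\ref{theorem:uaf-reg-server-false}, is the mutual freshness supplied by the ephemeral DH key exchange. For \textit{UAF-Endpoint-TLS1.2-DH} and \textit{UAF-ServerCert-TLS1.2-DH}, the server uniquely generates its own DH exponent (Assumption~\ref{per:server}), so the derived write keys $swk$ and $cwk$ are fresh and unique to this session \emph{from the server's own perspective}; unlike the RSA models, the server no longer depends on the client's unverifiable premaster secret for freshness. I would use CPSA's authentication tests to trace the channel-binding term inside $fc$ back to its origin: because the server-side binding ties $fc$ to (a hash of) the server's certificate, and because the fresh server exponent pins down the TLS session, the server-identity binding is effectively promoted to a session binding. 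For \textit{UAF-Exporter-TLS1.3} the argument is more direct, since the TLS-exporter value binds $fc$ to that specific TLS~1.3 session by construction.

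I expect the main obstacle to be discharging the uniqueness clause $Uniq$ rather than $Succ$. Exhibiting an agreeing client-reg strand ($Succ$) follows immediately from the single regular shape CPSA returns. The subtler task is to rule out the transplantation shapes that defeated the RSA models in Theorem~\ref{theorem:uaf-reg-server-false}: I must confirm that no shape admits a second complementary role satisfying the same context, i.e.\ that the adversary cannot reissue the server's fresh challenge into a parallel UAF session sharing the same endpoints. The argument is that any such transplant would require the adversary to reproduce the session-specific key material fixed by the server's freshly generated DH exponent, which the non-origination of the server's private key and the unique generation of that exponent forbid. Provided CPSA returns no counterexample shape for any of the three spaces---as Table~\ref{tbl:goals} reports---both clauses hold and Goal~\ref{goal:context-agreement} is true.
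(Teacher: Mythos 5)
Your proposal is correct and follows essentially the same route as the paper: a proof by enumeration in which CPSA, run from the initial server-reg perspective under the server's origination assumptions, terminates and returns a single shape per model satisfying Goal~\ref{goal:context-agreement}, which by CPSA's completeness excludes any counterexample. Your additional reasoning about DH-supplied mutual freshness and the binding to an authenticated entity matches the explanatory remarks the paper places immediately after its (one-line) proof rather than inside it.
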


\begin{proof}[{\bf Proof} (By enumeration).]
Each model produces a single shape that satisfies Goal~\ref{goal:context-agreement}.
\end{proof}

Each of the he \textit{UAF-Endpoint-TLS1.2-DH}, \textit{UAF-ServerCert-TLS1.2-DH}, and \textit{UAF-Exporter-TLS1.3} models 
overcomes the limitations of the other models.
Because these models use TLS with DH key exchange (TLS1.2-DH and TLS1.3), they ensure that both the client and the server contribute confidential freshness to the TLS channel.
The endpoint, server certificate, and exporter bindings ensure that the challenge binds to an authenticated entity---either to the server's identity, or the cryptographic context of the TLS channel itself.

{\mysaveB}
\subsection{Analysis of Authentication} 
{\mysaveC}
\label{ssec:auth-analysis}

Our analysis of UAF authentication is similar to 
that of registration because in both protocols, the server issues a challenge and receives a similar response.
Because the server issues the challenge in the same manner, it shares the main issue with the registration protocol: the challenge does not bind to a unique session in a manner that the client can verify.
Table~\ref{tbl:goals} includes columns summarizing our security goal theorems for our authentication roles.

\begin{theorem}
\label{theorem:uaf-auth-client}
For any initial client role holding the client assumptions,
and for any complementary server role in the \textit{[UAF-NoBinding-*, UAF-TokenBinding-*, UAF-ChannelId-*, UAF-Endpoint-*, UAF-ServerCert-*, UAF-Exporter-TLS1.3]} strand spaces, Goal~\ref{goal:context-agreement} is true.
\end{theorem}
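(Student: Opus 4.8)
The plan is to prove Theorem~\ref{theorem:uaf-auth-client} by enumeration, following the same strategy as the proof of Theorem~\ref{theorem:uaf-reg-client} but for the authentication strands. I would instantiate, for each of the listed models, an initial \textit{client-auth} strand carrying the client origination assumptions of Assumption~\ref{per:client}: a freshly generated (uniquely originating) client-random nonce together with a fresh premaster secret for \textit{TLS-1.2-RSA} or a fresh DH value for \textit{TLS-1.2-DH} and \textit{TLS-1.3}, and non-originating legitimate private keys. I would then let CPSA search the corresponding strand space exhaustively, relying on the completeness result~\cite{liskov2011completeness} cited in the excerpt to guarantee that, upon termination, every essentially different shape is produced. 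The goal is to confirm that each search yields a single shape containing a complementary \textit{server-auth} strand, and that this shape satisfies the authentication session context (Session Context~\ref{ctx:authentication})---agreement on the TLS sub-context and on $challenge$, $server$, and $appid$---thereby witnessing successful completion (Definition~\ref{def:sc}) and, via the singleton enumeration, unique completion (Definition~\ref{def:uc}), which together discharge Goal~\ref{goal:context-agreement}.

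The key steps, in order, are: (1)~restrict to the authentication roles, reusing the registration machinery since the authentication context is a strict subset of the registration context (it omits $username$) and the message structure is parallel; (2)~for each model, vary only the construction of $tlsData$ and the derivation of $swk$, $cwk$ to match the binding and TLS version, while holding Assumption~\ref{per:client} fixed; (3)~run CPSA and inspect the resulting shape. The underlying reason the goal holds from the client's perspective is an authentication-test argument: because the client itself contributes the fresh nonce and the fresh premaster/DH value, the TLS key material is unique to this run and unknown to the adversary, while one-way TLS authentication binds the channel to the certificate-bearing $server$. The server's challenge arrives encrypted under keys derived from this fresh, authenticated channel, so the transforming edge that produces it must originate on a regular \textit{server-auth} strand rather than a penetrator strand---yielding the single shape.

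The crucial subtlety---and the point I expect to require the most care---is explaining why the client perspective succeeds uniformly across \emph{all} listed bindings, including the weak ones (\textit{NoBinding}, \textit{TokenBinding}, \textit{ChannelId}) and even \textit{TLS-1.2-RSA}, which fail from the server's perspective in Theorem~\ref{theorem:uaf-reg-server-false}. This asymmetry is an artifact of the origination assumptions: the client \emph{knows} its own contributions are fresh, so its freshness alone pins down the session, whereas the server cannot assume the client's premaster secret or binding keys are fresh. The main obstacle is therefore not a deep combinatorial argument but establishing modeling fidelity and completeness: I must verify that Assumption~\ref{per:client} is encoded so that the freshness genuinely propagates into the derived write keys, confirm that CPSA terminates for every model in the list, and check that each terminal shape is a genuine singleton so that the $Uniq$ predicate of Definition~\ref{def:uc} is satisfied (no second complementary or intruder strand can agree on the context). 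Once termination and the singleton property are confirmed across the models, the enumeration directly certifies Goal~\ref{goal:context-agreement}.
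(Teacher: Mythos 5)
Your proposal is correct and follows essentially the same route as the paper: the paper's proof is likewise by enumeration, noting (as in the registration client case) that CPSA terminates for each listed model and discovers only a single shape, which illustrates correct execution and hence satisfies Goal~\ref{goal:context-agreement}. Your added explanation of why the client's own freshness assumptions account for the asymmetry with the server-side theorems is sound commentary, but it is supplementary to, not a departure from, the paper's argument.
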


\begin{proof}[{\bf Proof} (By enumeration).]
As for the client perspective in UAF registration, CPSA terminates and discovers only a single shape for each model, which shape illustrates correct execution, satisfying Goal~\ref{goal:context-agreement}.
\end{proof}

\begin{theorem}
\label{theorem:uaf-auth-server-false}
For any initial server-auth role holding the server assumptions, and 
for any complementary client-auth role in the [\textit{UAF-NoBinding-*, UAF-TokenBinding-*, UAF-ChannelId-*, UAF-Endpoint-TLS1.2-RSA, UAF-ServerCert-TLS1.2-RSA}] strand spaces, Goal~\ref{goal:context-agreement} is false.
\end{theorem}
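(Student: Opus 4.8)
The plan is to prove the goal false by counterexample using CPSA, mirroring the proof of Theorem~\ref{theorem:uaf-reg-server-false} for registration and leveraging the observation made at the start of Section~\ref{ssec:auth-analysis}: because the server issues its challenge in the same way during authentication, the authentication models inherit registration's core weakness. First I would load each listed model (\textit{UAF-NoBinding-*}, \textit{UAF-TokenBinding-*}, \textit{UAF-ChannelId-*}, \textit{UAF-Endpoint-TLS1.2-RSA}, \textit{UAF-ServerCert-TLS1.2-RSA}) into CPSA, fixing the point of view to a fully executed \textit{server-auth} strand of height $i = height(\mathcal{R}_a)$ together with the Server Assumptions~\ref{per:server}: non-originating private keys, a fresh server random nonce (and, for RSA, no premaster-secret freshness), and a uniquely originating $challenge$. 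I would then encode Goal~\ref{goal:context-agreement} against the UAF Authentication session context (Session Context~\ref{ctx:authentication}: agreement on $challenge$, $server$, and $appid$ plus the appropriate TLS context) as a CPSA security goal and run each search.

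Next I would read off, from each resulting shape tree, a concrete execution that violates the goal and confirm that no shape satisfies it. For the \textit{UAF-NoBinding-*} models the expected counterexample is exactly the challenge-reissuing shape of Figure~\ref{fig:fido_attack}: since the server binds its challenge only to $appid$---a public, spoofable value (see Section~\ref{sec:adversarial_model})---rather than to the TLS session, the adversary can relay the challenge to a client executing a different session, so the server completes with no complementary client strand agreeing on the full context. For \textit{UAF-TokenBinding-*} and \textit{UAF-ChannelId-*}, the violating shape arises because the Server Assumptions do not guarantee the client's binding key is fresh or unique to this TLS session, letting CPSA place a second session that reuses or compromises that key. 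For \textit{UAF-Endpoint-TLS1.2-RSA} and \textit{UAF-ServerCert-TLS1.2-RSA}, the binding pins the challenge to the server's identity but not to a session; combined with the RSA premaster-secret freshness gap (Section~\ref{ssecTLS}), CPSA yields a shape in which the client cannot determine which session issued the challenge. I would then invoke CPSA's termination-and-completeness guarantee~\cite{liskov2011completeness}: because each search terminates and enumerates all essentially different shapes, the presence of a violating shape together with the absence of any satisfying one establishes that Goal~\ref{goal:context-agreement} is false for every listed model.

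The main obstacle I anticipate is not the enumeration, which CPSA automates, but faithfully interpreting each shape---in particular, arguing that the dashed-arrow links the adversary introduces correspond to a genuine context-violating protocol interaction rather than an artifact of under-constrained origination assumptions, and verifying that the authentication context, which drops $username$ relative to registration, does not accidentally admit a satisfying shape that registration's richer context excluded. I would guard against this by checking that each disproof rests on the same structural fact as registration (the challenge binds to an endpoint or a public identifier, never to the session $j \leq height(\mathcal{R}_b)$ that the client can verify), so that the authentication and registration counterexamples share a common cause.
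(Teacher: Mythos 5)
Your proposal is correct and follows essentially the same route as the paper: a CPSA proof by counterexample from the server-auth perspective under the Server Assumptions, with the same per-model diagnosis the paper gives (unbound models fail because the challenge binds only to the inadequate \textit{appId}; token/channel-ID bindings fail because the server cannot assume the client's binding key is unique to the session; endpoint/server-certificate bindings over TLS1.2-RSA fail because the binding identifies an endpoint rather than a session and the client alone supplies premaster-secret freshness), closed off by CPSA's termination-and-completeness guarantee. The paper's actual proof is just a terser version of this, explicitly leaning on the registration analysis (Theorem~\ref{theorem:uaf-reg-server-false}) exactly as you do.
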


\begin{proof}[{\bf Proof} (By counterexample).]
For each model, CPSA finds at least one counterexample.
As in the analysis of UAF registration, in the counterexamples, the server has no guarantee that its challenge remains within the server's session with the client.
In the unbound models, this issue results from a lack of binding between the challenge and the session outside of the \textit{appId}, which is an inadequate binding.
For the client-oriented channel bindings (token binding, channel ID), the server has no assurance that the client's key material for the binding is unique to the session.
For the server-oriented channel bindings (endpoint, server certificate), the server cannot determine to which TLS session between the server and the client the challenge binds.
In RSA-based models, the server has no assurance that the client contributes confidential freshness to the session.
\end{proof}

\begin{theorem}
\label{theorem:uaf-auth-server-true}
For any initial server-auth role holding the server assumptions, and 
for any complementary client-auth role in the [\textit{UAF-Endpoint-TLS1.2-DH, UAF-ServerCert-TLS1.2-DH, UAF-Exporter-TLS1.3}]strand spaces, Goal~\ref{goal:context-agreement} is true.
\end{theorem}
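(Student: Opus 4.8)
The plan is to prove this theorem by enumeration with CPSA, exactly as in the proof of Theorem~\ref{theorem:uaf-reg-server-true} for the registration counterpart. This is the natural approach here because the authentication session context (Session Context~\ref{ctx:authentication}) is a strict subset of the registration context, and each of the three named models carries the stronger TLS-DH context (Session Context~\ref{ctx:tls-dh}), so the reasoning should be at least as strong as in registration. First I would instantiate the server-auth role in each of the three strand spaces under the server origination assumptions (Assumption~\ref{per:server}): a fresh server-random nonce, a fresh server DH value, and a uniquely originating $challenge$ for each session. Note that the registration-only username assumption does not apply in authentication. Against each instantiation I would run CPSA from the server's point of view and confirm that it terminates and returns a single shape corresponding to the ideal bilateral execution.

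Given that single shape, the argument splits along the two conjuncts of Goal~\ref{goal:context-agreement}. For successful completion (Definition~\ref{def:sc}), I would read off from the shape that the fully executing server-auth strand (height $i=2$, having received the final message) is accompanied by a complementary client-auth strand of height $j=2$ that agrees on $challenge$, $server$, and $appid$ together with the TLS-DH context, so $Succ$ holds. For unique completion (Definition~\ref{def:uc}), the fact that CPSA produces no other essentially different shape---in particular no shape in which the complementary role is realized by an intruder strand or by a second honest strand sharing the same session context---yields $Uniq$. Because CPSA terminated, this enumeration is exhaustive, so no counterexample exists and Goal~\ref{goal:context-agreement} holds from the server perspective.

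The delicate step, and the one that separates this positive result from the failure recorded in Theorem~\ref{theorem:uaf-auth-server-false}, is ruling out transplantation of the $challenge$ between distinct sessions that share the same endpoints. Because registration and authentication strands coexist in one strand space, I would also have to confirm that no cross-protocol shape lets an assertion or challenge migrate between the two protocols; this is handled by the authentication assertion being signed under $K_{priv}$ rather than the attestation key, together with the uniquely originating challenge. The substantive point is \emph{why} the endpoint and server-certificate bindings suffice here when they fail under RSA: with an ephemeral DH exchange both parties inject confidential freshness, so the channel write keys $swk$ and $cwk$ are session-unique, and the TLS-DH context forces agreement on the DH exponents $x$ and $y$. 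Consequently a binding to the authenticated server, though identical across sessions between the same endpoints, is delivered over a session-fresh encrypted channel, which defeats the endpoint-migration attack. I expect verifying that CPSA indeed collapses all such migration attempts to the single honest shape to be the crux of the argument.
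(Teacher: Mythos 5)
Your proposal matches the paper's proof, which is likewise a CPSA enumeration argument: for each of the three models, CPSA terminates with a single shape reflecting the ideal execution, and the result is justified by direct analogy to the registration counterpart (Theorem~\ref{theorem:uaf-reg-server-true}), exactly as you structure it. Your added unpacking of $Succ$ and $Uniq$ and the explanation of why DH-contributed freshness defeats endpoint migration simply make explicit what the paper states in its surrounding discussion, so there is no substantive difference in approach.
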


\begin{proof}[{\bf Proof} (By enumeration).]
As in Theorem~\ref{theorem:uaf-reg-server-true}, implementing exporter channel binding or binding to the server's authenticated identity when using TLS1.2-DH satisfies Goal~\ref{goal:context-agreement} for the UAF authentication server role.
\end{proof}
\clearpage
\section{Discussion}
{\mysaveAA}
\label{sec:discussion}

We now discuss five issues that arise in our work: 
(1)~the necessity of binding cryptographic values at their origin,
(2)~implications of making channel binding optional in UAF, 
(3)~dealing with proxies,
(4)~the value of formal-methods analyses, and 
(5)~open problems.

{\mysaveB}
\subsection{Binding Values at Their Origin} 
{\mysaveC}
\label{sec:origin-binding}

In a prudent protocol design, every protocol role must bind cryptographic values to a session's context.
Binding session data as early as possible---ideally at the origin of the data---reduces an adversary's opportunities to produce protocol interactions.
To build a session context, we suggest incorporating an underlying cryptographic session, incorporating additional attributes, such as the protocol's name, message sequence numbers, unique session identifiers, communicant identifiers, and any other values unique to the session.
Many protocol interactions arise from an inability to authenticate sensitive cryptographic values, resulting from a lack of binding. 
Often, the vulnerability can be mitigated by binding to a context appropriately.

In previous versions of our models, we included a ``dual binding'' model in which the server, in addition to the client, binds the challenge to the TLS channel using a TLS1.3 exporter binding.
From our analysis, we found this model is equivalent to our existing \textit{UAF-Exporter-TLS1.3} model, which satisfies each of our security goals.
Although we omit this model for redundancy, 
binding the challenge at the server would constitute best practice,
because doing so would make explicit 
for which UAF session the server generates the challenge.

{\mysaveB}
\subsection{UAF's Optional Channel Binding} 
{\mysaveC}
\label{sec:optional_binding}

In 2018, the FIDO Alliance intentionally specified these bindings as an optional feature of UAF, citing inadequate support for channel bindings and challenges facing perimeter proxies~\cite{technote2018}:
``$\dots$ the addition of channel-binding information in FIDO assertions is optional: not all client platforms support Channel ID or Token Binding, and even if a client has all the necessary support for channel-binding, it might make sense not to enforce channel-binding.''
To promote adoption of the UAF standard, FIDO specified channel binding as an optional feature.

From a cryptographer's perspective, without channel binding,
the UAF authentication protocol is vulnerable to a DY adversary: 
the adversary can exploit the lack of channel binding to launch a MitM attack, masquerading as an honest client to a server, without access to the client's authenticator.
To mitigate this attack, the protocol must require channel binding to an appropriate TLS session (e.g., TLS1.3).

From a policy perspective, requiring channel binding in the UAF standard presents practical problems: entities unable to support the channel binding standard may favor a different system over UAF, and some of the TLS channel bindings are still in draft form.
Optional channel binding is a conscious tradeoff between security and ease of adoption, which supports the goal of reducing and eliminating traditional, password-based authentication.
Policy makers, adopters, and users, however, should be aware that optional channel binding enables an adversary to create potentially serious protocol interactions in FIDO UAF authentication.

Additionally, the specification permits a server to disregard incorrect bindings to facilitate clients that bind to other channels (e.g., when communicating through a perimeter proxy, as explained in Section~\ref{ssec:binding}).
From a cryptographer's perspective, a server that fails to verify a binding enables harmful protocol interactions.
From a policy perspective, supporting clients that communicate through proxies facilitates the adoption of UAF but potentially enables hostile proxies (adversaries) to attack the protocol.
Developing a mechanism for the server to verify proxy bindings, such as cryptographically binding the server's session with
the client's channel with the proxy,
may mitigate the weakness enabled by this policy.

{\mysaveB}
\subsection{Dealing with Proxies} 
{\mysaveC}
\label{ssec:dis-proxy}

A significant obstacle to UAF's channel binding are clients that communicate through TLS proxies, which clients bind to channels between themselves and the proxies rather than to channels between themselves and UAF servers.
Such proxies might be border proxies that inspect incoming and outgoing traffic, or load-balancing proxies that distribute traffic to different endpoints.
This scenario affects channel bindings that bind to exported key material 
or to the server's identity, because the client will bind to key material,
or to a proxy's certificate, that does not correspond to the server.

To mitigate this issue, UAF grants servers discretion to accept incorrect bindings under the assumption that the server determines the binding to be reasonable, such as a binding to a well-known proxy.
This discretion presents an opening for an adversary wishing to attack the protocol by reissuing UAF challenges (see Section~\ref{sec:attack}), for which the primary mitigation is channel binding.
Unable to determine a client's intent from an incorrect binding, a server risks accepting a fraudulent assertion from an adversary.

There exist attempts, such as token binding~\cite{ietf-tokbind-ttrp-08}, 
to augment existing channel bindings
with measures to enable a proxy to convey additional information 
about a client's channel binding to a server.
To our knowledge, no such measures currently exist for exporter bindings.
We believe that channel bindings that explicitly address communication over TLS proxies present the most attractive long-term solution.

{\mysaveB}
\subsection{The Value of Using Formal Methods} 
{\mysaveC}
\label{ssec:formal-methods}

For simple and complex protocols, formal-methods protocol analysis is a powerful tool that helps make explicit a protocol's security goals and assumptions, identify vulnerabilities resulting from design decisions, and verify corrections made to known issues.
Vulnerabilities resulting from an adversary manipulating unbounded runs of a protocol are challenging for unassisted human beings to identify, and can linger for many years prior to discovery.
Adversaries with capabilities of performing formal-methods analysis may discover and exploit such vulnerabilities, if no benevolent studies identify the issues first.

The benefits of formal methods are particularly valuable---and often neglected---during the design of new protocols, which often defer formal verification in favor of publishing imperfect standards that experts, researchers, and 
adversaries later analyze.
While there is tremendous value in analyzing existing protocols, the most powerful applications of formal methods are during a protocol's design process.
Particularly complex, large standards such as UAF may require multiple formal-methods studies of their components during the design process, and a single study may be insufficient to consider all important aspects of the protocol.

FIDO's failure to apply formal methods while designing UAF 
contributed to vulnerabilities in the adopted standard: 
previous formal methods studies of UAF uncovered numerous attacks, 
and we identify an attack against UAF's channel binding.
There is a strong need for increased partnership between protocol designers and 
labs capable of formal-methods verification of protocols throughout the protocol design process.
Efforts such as those by Cremers et al. \cite{cremers2017comprehensive} to analyze emerging standards in development (e.g., TLS1.3) illustrate significant value in analyzing protocols to identify and correct vulnerabilities early, rather than when they potentially impact millions of users.

{\mysaveB}
\subsection{Open Problems and Future Work} 
{\mysaveC}
\label{ssec:open}

Our work on FIDO UAF authentication raises two open problems: 
(1)~Develop channel bindings that bind to protocol endpoints and sessions, and encapsulate additional bindings to address cases where the client or server communicate with a proxy.
(2)~Explore channel binding in FIDO2~\cite{fido2}, a new version of FIDO building on UAF that expands the available types of authenticators and improves interoperability with existing standards.

As future work, we are developing automatic tools for cryptographically binding protocol messages to their context in a manner that provably satisfies session context agreement in the DY model.
\clearpage
\section{CPSA Model Code (Selected Examples)} 
{\mysaveAA}
\label{sec:code}
We provide key CPSA model code snippets from our FIDO CPSA models, organized as roles and skeletons.
For complete models, see Github~\cite{PALgithub}.
Figure~\ref{fig:cpsa_src_client} and Figure~\ref{fig:cpsa_src_server} show CPSA source code for specifying the \textit{client-reg} and \textit{server-auth} strands.
Figure~\ref{fig:cpsa_src_uaf_macros} provides examples of macros that we call in strands to model UAF.
Figure~\ref{fig:cpsa_src_bindings} lists macros that we use to model the various UAF binding methods.
Figure~\ref{fig:cpsa_src_client_skeleton} and Figure~\ref{fig:cpsa_src_server_skeleton} are examples of CPSA skeletons, which specify the cryptographic specific for querying an authenticity security goal.
As explained in Section~\ref{ssecCPSA}, our models
superficially resemble (but are not) executable source code.

\begin{figure}[h] 
\begin{Verbatim}
  (defrole client-reg
    (vars
      (username auth server ca appid name)
      (challenge text)
      (cr sr random32) ;; TLS nonces. cr: client random, sr: server random
      (pms random48) ;; TLS pre-master secret generated by the client.
      (authk akey) ;; Authenticator's asymmetric key for this registration.
    )
    (trace
      (TLS send recv pms cr sr server ca)
      (UAF_Reg_Unbound_TLS12 send recv username appid challenge auth authk 
        cr sr pms)
    )
  )
\end{Verbatim}
\caption{
CPSA specification of a \textit{client-reg} role in UAF registration without channel binding, using TLS 1.2.
A role consists of variable definitions and a trace expressing terms that a strand sends and receives from the network.
Variable definitions comprise s-expressions that list labels followed by a sort (type).
The macro \textit{TLS} generates client terms to establish a TLS session with the server.
The macro \textit{UAF\_Reg\_Unbound\_TLS12} generates client terms that carry out UAF registration using a specific binding method and TLS version.
To model alternative versions, we substitute the TLS macro to generate terms for TLS 1.2 or TLS 1.3, and substitute the UAF macro for the desired binding and TLS version.
To model the \textit{auth-reg} role, we reverse the sign of each term by altering the order of the \textit{send} and \textit{recv} inputs to the macros.
}
\label{fig:cpsa_src_client}
\end{figure}

\clearpage

\begin{figure} 
\vspace*{24pt}
\begin{Verbatim}
(defrole server-auth
(vars
  (auth server ca appid name)
  (challenge text)
  (cr sr random32) ;; TLS nonces. cr: client random, sr: server random
  (authk akey)
  (spk akey) ;; Server's public key.
  (x expt) ;; Client's DH secret for TLS 1.3.
  (y rndx) ;; Server's DH secret for TLS 1.3.
)
(trace
  (TLS recv send cr sr x y server spk (invk spk) ca 
    (HandshakeSecret (DHpublic x) y))
  (UAF_Auth_Exporter_TLS13 recv send appid challenge authk cr sr spk x y)
)
(uniq-gen y)
)
\end{Verbatim}
\caption{
CPSA specification of a \textit{server-auth} role in UAF authentication, using TLS 1.3 and TLS exporter channel binding.
Contrasting Figure~\ref{fig:cpsa_src_client}, the macro \textit{TLS} now generates terms for TLS 1.3, and the macro \textit{UAF\_Auth\_Exporter\_TLS13} generates the appropriate terms for this code snippet's scenario.
As before, swapping the \textit{send} and \textit{recv} arguments for the macros generates terms appropriate for a \textit{client-auth} strand.
To carry out TLS 1.3's DH key exchange, we specify DH secrets and, from the server's perspective, declare $y$ as uniquely generated for each protocol session. 
}
\label{fig:cpsa_src_server}
\end{figure}


\begin{figure} 
\vspace*{24pt}
\begin{Verbatim}
(defmacro (UAF_Registration d1 d2 username appid challenge auth authk 
    binding cwk swk)
  (^
    (d2 (UAF_RegChallenge username appid challenge swk))
    (d1 (UAF_RegResponse username appid challenge auth authk binding cwk))))

(defmacro (UAF_Authentication d1 d2 appid challenge authk binding cwk swk)
  (^
    (d2 (UAF_AuthChallenge appid challenge swk))
    (d1 (UAF_AuthResponse challenge appid authk binding cwk))))
)
\end{Verbatim}
\caption{
Core CPSA macros for generating UAF registration and authentication terms.
Each macro accepts as an input an event ordering ($d1$, $d2$) comprising an ordering of \textit{send} and \textit{recv} events.
By altering the ordering of \textit{send} and \textit{recv}, the macro generates terms for both server and client roles.
The symmetric keys $cwk$ and $swk$ input from a preceding TLS 1.2 or TLS 1.3 handshake.
The channel binding $binding$ derives from one of several binding macros (see Figure~\ref{fig:cpsa_src_bindings}) or the string literal ``unbound''.
Remaining parameters correspond to UAF variables (see Section~\ref{sec:cpsa-model}).
}
\label{fig:cpsa_src_uaf_macros}
\end{figure}


\begin{figure} 
\vspace*{24pt}
\begin{Verbatim}
(defmacro (UAF_ChannelID challenge client)
  (hash "uaf_channel_id" challenge (pubk client) (enc (pubk client) 
    (privk client))))

(defmacro (UAF_TokenBinding challenge client)
  (hash "uaf_token_binding" challenge (cat (pubk client) (enc (pubk client) 
    (privk client)))))

(defmacro (UAF_ServerEndPoint challenge server ca)
  (hash "uaf_server_endpoint" challenge (hash (Certificate server ca))))

(defmacro (UAF_ServerCert challenge server ca)
  (hash "uaf_server_cert" challenge (Certificate server ca)))
  
;;; TLS Exporter PRF. We approximate this with a hash function.
(defmacro (UAF_Exporter ms label cr sr ctx)
  (hash ms label cr sr ctx))
\end{Verbatim}
\caption{
CPSA macros for incorporating different channel bindings in \textit{tlsData}.
Each macro, which we name after the corresponding channel binding in the UAF standard, implements a cryptographic abstraction of that binding in CPSA. 
}
\label{fig:cpsa_src_bindings}
\end{figure}


\clearpage
\vspace*{-48pt}
\begin{figure}[!h] 
\begin{Verbatim}
;; Client-reg perspective skeleton.
(defskeleton uaf-unbound-tls12
  (vars
    (server ca name)
    (cr sr random32)
    (pms random48)
    (authk akey)
  )
  (defstrandmax client-reg
    (server server) (ca ca) (cr cr) (pms pms) (authk authk) (sr sr))
  (non-orig (privk ca) (privk server) (invk authk))
  (uniq-orig cr pms)
  (uniq-orig sr) ;; Prevents multiple instances of the same server 
                 ;; interacting with the client.
)

\end{Verbatim}
\vspace*{-12pt} 
\caption{
CPSA skeleton that instantiates a \textit{client-reg} strand and specifies the strand's cryptographic assumptions.
To prevent the intruder from learning the private keys of $ca$, $server$, we specify their private keys as non-originating.
Similarly, we specify the asymmetric inverse authentication key $authk$ as non-originating.
To guarantee freshness of TLS values, we specify the client nonce $cr$, the server nonce $sr$, and the premaster secret $pms$ as uniquely originating---the strand in this skeleton will generate fresh values for each run of the protocol.
}
\label{fig:cpsa_src_client_skeleton}
\end{figure}

\vfill

\begin{figure}[!h] 
\begin{Verbatim}
;; Server-reg perspective skeleton.
(defskeleton uaf-unbound-tls12
  (vars
    (server ca name)
    (sr random32)
    (challenge text)
    (authk akey)
  )
  (defstrandmax server-reg
    (server server) (ca ca) (sr sr) (challenge challenge) (authk authk))
  (non-orig (privk ca) (privk server) (invk authk))
  (uniq-orig challenge sr)
)
\end{Verbatim}
\vspace*{-8pt} 
\caption{
CPSA skeleton that instantiates a \textit{server-reg} strand and specifies the strand's cryptographic assumptions.
This server strand shares the non-origination assumptions with the client strand in Figure~\ref{fig:cpsa_src_client_skeleton}.
The server strand assumes freshness of $challenge$ and $sr$ because we specify these as uniquely originating.
}
\label{fig:cpsa_src_server_skeleton}
\end{figure}

\clearpage 
\setcounter{tocdepth}{3}
\tableofcontents 

\end{document}